%% file: main.tex
\pdfoutput=1
\documentclass[conference]{IEEEtran}
\IEEEoverridecommandlockouts
\usepackage[utf8]{inputenc}

\usepackage{cite}
\usepackage{amsmath,amssymb,amsfonts}
\usepackage{algorithmic}
\usepackage{graphicx}
\usepackage{textcomp}
\usepackage{pifont}
\usepackage{amsfonts}
\usepackage[table,xcdraw]{xcolor}
\usepackage{booktabs} 
\usepackage{colortbl}
\usepackage[a4paper, total={184mm,239mm}]{geometry}
\usepackage{capt-of}   
\usepackage{float}     
\usepackage{xcolor}

\usepackage{amsmath}
\usepackage{amsthm}
\usepackage{url} 
\usepackage[switch]{lineno}
\usepackage{multirow} 
\usepackage{makecell}   
\usepackage{enumitem}

\usepackage{placeins}
\usepackage{cuted}
\usepackage{tikz}
\usetikzlibrary{shapes.geometric, arrows.meta, positioning, calc}
\usepackage{microtype}
\newcommand{\change}[1]{\textcolor{black}{#1}}

\newtheorem{definition}{Definition}
\newtheorem{theorem}{Theorem}
\newtheorem{lemma}{Lemma} 

\usepackage{ifthen}
\newif\ifDraftMode \newcommand{\setDraftMode}[1]{\ifthenelse{\equal{#1}{yes}}{\DraftModetrue}{\DraftModefalse}}

\setDraftMode{yes}
\newcommand{\thickhat}[1]{\mathbf{\widehat{\text{$#1$}}}}

\usepackage{lipsum}
\def\BibTeX{{\rm B\kern-.05em{\sc i\kern-.025em b}\kern-.08em
    T\kern-.1667em\lower.7ex\hbox{E}\kern-.125emX}}
\usepackage{algorithm2e}
\begin{document}

\title{Uni-SFU: Algorithm-HW Co-Design for Universal SFUs via Mixed-Degree Piecewise Approximation}

\author{
Miao~Sun,
Yucheng~Huang,
Mingcong~Cao,
Jaehyun~Park,
Partha~Pratim~Pande,
and~Umit~Y.~Ogras
\thanks{Miao Sun and Partha Pratim Pande are with the School of Electrical Engineering and Computer Science, Washington State University, Pullman, WA 99164 USA (e-mail: miao.sun@wsu.edu; pande@wsu.edu).}%
\thanks{Yucheng Huang, Mingcong Cao, and Umit Y. Ogras are with the Department of Electrical and Computer Engineering, University of Wisconsin--Madison, Madison, WI 53706 USA (e-mail: uogras@wisc.edu).}%
\thanks{Jaehyun Park is with the Department of Electrical, Electronic and Computer Engineering, University of Ulsan, Ulsan 44610, Republic of Korea (email:jaehyun@ulsan.ac.kr).}%
\thanks{This work was supported by the US National Science Foundation (NSF) grant CSR-2308530, the Army Research Office under Grant ARO-W911NF-24-1-0240.}%
}

\maketitle

\begin{abstract}
Nonlinear activation functions are essential to modern deep neural networks (DNNs), but their hardware evaluation places significant pressure on the special-function units (SFUs) of GPUs and \change{custom accelerators. Therefore, piecewise polynomial approximations are commonly used within allowed error bounds to improve computational efficiency.} However, existing techniques often approximate each \change{activation} function in isolation using fixed-degree polynomials and uniform segments, leading to hardware redundancy and sub-optimal precision. To address these limitations, we present Uni-SFU, an algorithm-hardware co-design framework that jointly optimizes approximation accuracy and silicon area for a diverse set of activation functions. Uni-SFU leverages a joint search across all target functions to assign mixed-degree polynomials to nonuniform segments, guided by an RTL-derived area cost model. 
\change{This approach identifies a unified hardware configuration to implement the target activation functions under given accuracy constraints.}
Validated across over 700 neural network variants and three Natural Language Processing (NLP) models, Uni-SFU achieves a superior Mean Squared Error (MSE) below $8.22\times10^{-8}$, limiting top-1 accuracy degradation to within 1.02\% compared to floating-point baselines. \change{The proposed design occupies only $6,800$~$\mu\text{m}^2$ in GF~22nm CMOS technology}, achieving a superior trade-off between silicon area and system-level accuracy compared to SOTA counterparts. 
\end{abstract}



\input{text/I-introduction-sfu}
\input{text/II-background_and_related_works-sfu}
\input{text/III-Uni-SFU}
\input{text/IV-Hardware-Implementation}
\input{text/V-Implementation_and_Evaluation-sfu}
\input{text/VI-Conclusion-sfu}

\vspace{6mm}
\noindent \textit{Disclosure}: Dr. Ogras is affiliated with Samsung Austin Research \& Development Center and Advanced Computing Lab (SARC/ACL). This relationship has been approved under applicable outside activities policies.
\clearpage
\bibliographystyle{unsrt}
\bibliography{references/ref}

\end{document}

%% file: text/I-introduction-sfu.tex
\section{Introduction} \label{sec:introduction}

An increasingly diverse set of nonlinear activation functions plays a crucial role in the success of DNNs.
They enable AI/ML models to learn complex decision boundaries and data distributions essential to tasks, such as image classification~\cite{krizhevsky2012imagenet}, natural language understanding~\cite{devlin2019bert}, and generative modeling~\cite{ho2020denoising}.
Early non-linear activation functions like ReLU~\cite{nair2010rectified} apply a simple threshold on input features. 
As the AI/ML models continue to evolve, they employ more complex and multiple variants of non-linear activation functions. 
For example, Transformer-based models~\cite{vaswani2017attention} employ Gaussian Error Linear Unit ($\mathrm{GELU}$)~\cite{hendrycks2016gaussian} and its variants, Mamba~\cite{gu2024mamba} integrates Sigmoid Linear Unit ($\mathrm{SiLU})$ and $\mathrm{Swish}$,
while reinforcement learning (RL) agents~\cite{schulman2017proximal} often employ $\mathrm{Tanh}$ and $\mathrm{Sigmoid}$.
These complex activation functions offer smooth probabilistic characteristics and improved training stability, but require $\mathbf{3\times}$ to $\mathbf{8\times}$ more $\text{FLOPs}$ than $\text{ReLU}$. 

As activation functions become increasingly complex, their efficient evaluation in silicon becomes a critical bottleneck for neural network accelerators~\cite{ozen2022architecting, maity2021chauffeur}. Full-precision floating-point evaluation of transcendental functions such as exponentials and logarithms is prohibitively expensive in GPU or ASIC cores, making SFUs indispensable~\cite{nvdiaSFUExplain, vulchi2025hyppo}. Even in specialized quantization-based architectures like SwiftTron~\cite{marchisio2023swifttron}, activation functions often require dequantization and complex rescaling to restore the necessary precision for non-linear operations such as GELU and Softmax. \change{Indeed, recent trends show that} the growing imbalance between Tensor Core throughput and the multi-function unit (MUFU) capacity of GPUs has made activation evaluation a dominant pipeline bottleneck, as reported in FlashAttention-4~\cite{zadouri2026flashattention}. Hence, SFU designs must co-optimize approximation accuracy, silicon area, performance and power consumption 
simultaneously. Furthermore, modern GPU and ASIC accelerators employ multiple parallel processing threads to sustain high throughput, with each thread independently evaluating activation functions on its assigned data stream~\cite{jouppi2017datacenter, liao2021ascend}. 
However, naively replicating the full coefficient \change{memory for each thread increases the storage requirements proportionally~\cite{chen2016eyeriss, choquette20213}.}


Existing hardware approximation methods for activation functions include iterative approaches \change{(e.g., CORDIC~\cite{volder2015cordic} and Newton--Raphson~\cite{markstein2004software}), global expansion-based methods (e.g., Taylor series~\cite{nilsson2014hardware} and Chebyshev polynomials~\cite{nicolas2024chebyshev}), and piecewise polynomial approximation (PPA)~\cite{dong2020plac}. 
Each approach offers different trade-offs in accuracy, area, and function coverage, as detailed in Section~\ref{sec:II}.} 
Among them, PPA has received the most attention in recent SFU hardware designs due to its compatibility with lookup table (LUT)-based datapaths and its ability to support a wide range of activation functions through offline coefficient preloading~\cite{reggiani2023flex, prasad2025pace, geng2023qpa}.

Current piecewise approximation-based SFU designs suffer from two major inefficiencies. First, they approximate each segment with a uniform degree of polynomial, \change{but regions with low curvature (e.g., the tails of SiLU) can be captured accurately with linear or constant polynomials, while other regions may require higher-degree polynomials~\cite{reggiani2023flex, prasad2025pace, prasad2025lite}.} Therefore, this uniform provisioning wastes both coefficient storage and compute resources on segments that do not require high-order approximation~\cite{dong2020plac, lyu2021ml}. Second, and more critically, existing methods optimize each activation function independently~\cite{geng2023qpa, lu2023efficient, zheng2026ufp}. In practice, a single SFU must serve multiple activation functions across different layers of the same accelerator, since different layers in modern DNN accelerators are configured with different nonlinear activation functions~\cite{vaswani2017attention, devlin2019bert}. Hence, isolated per-function optimization leads to poor area and accuracy trade-offs. For example, a function deemed ``simple'' in isolation may under-utilize expensive compute resources already required by a co-deployed complex function~\cite{reggiani2023flex}. Jointly determining a single hardware configuration that minimizes area while satisfying the accuracy requirements of \textit{all target functions simultaneously} remains an open problem.

To address this problem, we propose Uni-SFU, an \textit{Algorithm-Hardware Co-Design} Framework for designing efficient universal SFUs targeting both server and edge-computing systems. 
\change{The proposed framework jointly explores hardware resources across multiple activation functions and can be adapted to different nonlinear functions through offline coefficient generation.}
\change{Uni-SFU optimizes the \textit{nonuniform breakpoint} locations and assigns variable polynomial degrees per segment using a novel \textit{mixed-degree polynomial} approximation algorithm.}
It jointly \change{co-designs} the approximations of all activation functions and hardware, directly resolving the conflicts among approximation accuracy, memory footprint, and execution latency. 
Furthermore, Uni-SFU directly incorporates a silicon area cost model derived from RTL synthesis rather than optimizing a proxy metric such as the number of segments, in contrast to the current SOTA counterparts.

In current neural network processors, the output from one layer is typically partitioned into data groups. For example, a Streaming Multiprocessor (SM) in a GPU can generate as many outputs per cycle as its CUDA core count. 
To process multiple inputs in parallel efficiently, Uni-SFU adopts a novel multi-lane processing architecture. It replicates the Breakpoint Comparator and Execution Unit for each lane and stores the most recently used coefficients in a dedicated per-lane Local Cache. 
These per-lane hardware units share a Global Multi-Level LUT, which stores all of the segments and polynomial coefficients required by the active activation function. 
Sharing the storage across multiple layers saves area, while the per-lane Local Cache optimizes the end-to-end latency and throughput, maintaining near-peak throughput. 

Uni-SFU is validated across six popular activation functions using over 700 neural network variants from the TIMM library~\cite{rw2019timm}, covering diverse CNN and Transformer architectures on ImageNet-1k~\cite{imagenet15russakovsky}. Experimental results demonstrate that Uni-SFU achieves a maximum MSE of $8.22\times10^{-8}$ across all supported functions. It occupies $6,800$~$\mu\text{m}^2$ in GF 22nm CMOS technology. This compact footprint is achieved by co-optimizing non-uniform breakpoints and mixed-degree polynomials \change{for all target activation functions}, eliminating hardware redundancy without compromising approximation precision.
Uni-SFU employs a fully pipelined architecture, sustaining a peak throughput of 32-bit output per cycle per lane. The execution latency for an individual request varies from 11 to 16 cycles, depending on the polynomial degree and Local Cache hit status.
When integrated with state-of-the-art (SOTA) neural network models, the end-to-end inference accuracy degrades by less than $1.02\%$ compared to floating-point baselines, validating the effectiveness of the proposed co-design approach.

The main contributions of this paper are as follows:
\begin{itemize}[leftmargin=*]

\item \textbf{An Algorithm--Hardware Co-Design Framework}
that simultaneously approximates multiple activation functions using non-uniform breakpoint search and mixed-degree polynomial assignment, guided by an RTL-driven area model.

\item \textbf{A Universal Reconfigurable Hardware Architecture}
that supports six piecewise-approximable activation functions under FP32 precision. Its multi-lane extension introduces a Local Cache that exploits high input locality across lanes to achieve significant area efficiency. 

\item \textbf{Hardware Implementation and End-to-End Validation} 
in GF 22nm technology. Comprehensive end-to-end validations across 700 neural network models and three NLP models show Uni-SFU robustness and high inference fidelity across a diverse range of mainstream DNN workloads.
\end{itemize}

The remainder reviews related work (Section~\ref{sec:II}), the co-design framework (Section~\ref{sec:III}), hardware architecture (Section~\ref{sec:IV}), experimental results (Section~\ref{sec:V}), and concludes in Section~\ref{sec:VI}.



%% file: text/II-background_and_related_works-sfu.tex
\section{Related Work} \label{sec:II}

Hardware approximation of nonlinear activation functions falls into three broad families: iterative methods, expansion-based methods, and piecewise polynomial approximation.

\textbf{Iterative methods} such as CORDIC~\cite{volder2015cordic} and Newton--Raphson iteration~\cite{markstein2004software} compute transcendental functions through successive refinement. Although hardware-friendly, CORDIC suffers from slow convergence and long computation cycles that limit throughput. Similarly, the Newton--Raphson method requires multiple multiplications per iteration, resulting in a long critical path and reducing operating frequency~\cite{hong2024reconfigurable}. However, such iterative methods are not universally applicable; for instance, CORDIC natively supports only a fixed set of trigonometric and hyperbolic functions and fails to directly accommodate modern activations like GELU or SiLU without significant restructuring.

\textbf{Expansion-based methods} fit a global polynomial over the entire input domain. The Taylor series~\cite{nilsson2014hardware} offers a simple coefficient structure but rapidly degrades away from the expansion point and becomes numerically unstable for wide input ranges. The Chebyshev approximation~\cite{nicolas2024chebyshev} achieves a near-minimax error distribution and has been adopted in recent hardware designs. For instance, UFP~\cite{zheng2026ufp} uses a third-degree Chebyshev framework with dynamic-programming segmentation to support a wide range of nonlinear functions under integer arithmetic. Despite their accuracy advantages, global expansion methods must use a sufficiently high polynomial degree to capture the steepest regions of the target function (e.g., the elbow of GELU near zero), imposing the same high multiplier cost uniformly across regions where a constant or linear approximation suffices.

\textbf{Piecewise polynomial approximation} partitions the input domain into segments and fits a low-degree polynomial to each. This allows hardware complexity to be matched to local function behavior: steep regions use higher-degree polynomials, while smooth regions use constant or linear fits, directly reducing multiplier and adder counts. PLAC~\cite{dong2020plac} established the foundation for error-flattened PWL approximation, while ML-PLAC~\cite{lyu2021ml} extended this to multiplier-less hardware via shift-and-add operations. More recently, designs like MARCA~\cite{li2024marca} have employed function-specific nonuniform fitting to improve accuracy for specialized accelerators. However, such approaches often rely on heuristic partitioning for a limited set of functions, lacking a unified optimization framework that systematically balances area-accuracy trade-offs across a diverse library of activations. 
Consequently, a gap remains in achieving a universal and area-efficient SFU \change{design} that is robust across massive-scale neural network workloads. 
PPA maps naturally onto a unified hardware architecture, where a single LUT-based coefficient store and fixed MAC-ADD datapath can support diverse activation functions \change{by reloading parameters at runtime.}
Based on how the approximation is applied to hardware, existing SFU designs can be further divided into \textit{dedicated} and \textit{unified} architectures, as summarized in Table~\ref{tab:sfu_feature_comparison}. Prior work on dedicated SFU architectures has largely focused on individual activation functions. For example, GELU-MSDF~\cite{taghavizade2024gelu} proposes a dedicated GELU accelerator that exploits multi-digit signed representations to reduce gate count, but the design is intrinsically tied to the GELU function and cannot be extended to other activations. Li et al. \cite{li2023high} propose a high-speed SFU for Softmax and GELU that replaces standard arithmetic with efficient shift-and-add operations. Although it delivers 102.4~Gbps at 200~MHz, the hardware is non-reconfigurable, limiting its use to a fixed set of functions. Similarly, PEANO-ViT~\cite{sadeghi2024peano} co-locates dedicated layer normalization, softmax, and GELU blocks on a single ViT accelerator via different approximation methods, \change{but each unit is still independently provisioned,}
precluding any cross-function sharing or reconfiguration. While these approaches achieve high efficiency for their target functions, they ultimately produce distinct physical hardware blocks for each algorithm \textit{rather than a single shared Execution Unit}, and cannot be easily reused across models with varying nonlinearity requirements.

To improve flexibility, subsequent research has explored unified SFU architectures supporting multiple activations simultaneously using PPA. QPA~\cite{geng2023qpa} proposes a quantization-aware PPA methodology that uses the Remez algorithm for minimax fitting and assigns per-multiplier coefficient bit widths to reduce hardware cost. However, each activation function is optimized independently without a unified hardware architecture that jointly supports multiple activation functions on shared resources. Flex-SFU~\cite{reggiani2023flex} supports linear and quadratic degrees with nonuniform breakpoints, while PACE~\cite{prasad2025pace} generalizes this approach to allow any polynomial degree. However, both techniques enforce a single uniform degree across all segments of a given function. 
\change{Similarly, UFP~\cite{zheng2026ufp} adopts a Chebyshev polynomial framework with dynamic-programming segmentation but uses a single degree for all segments.} 
\vspace{-1mm}
\section{Design Challenges and Proposed Solutions}
\label{sec:III}

\change{Existing solutions fail to exploit synergies across diverse activation functions and address the following three design challenges, as summarized in Table~\ref{tab:sfu_feature_comparison}.}

\newcommand{\cmark}{\ding{51}} 
\newcommand{\xmark}{\ding{55}} 

\begin{table}[t]
\centering
\caption{Comparison of SFU Design Features.}
\label{tab:sfu_feature_comparison}
\setlength{\tabcolsep}{3.5pt} 
\renewcommand{\arraystretch}{1.15}
\begin{tabular}{l@{\hspace{6pt}} c c c}
\toprule
\textbf{Feature} & 
\makecell[c]{\textbf{Dedicated}\\\textbf{SFU}\\\scriptsize[34]--[37]} & 
\makecell[c]{\textbf{Unified}\\\textbf{SFU}\\\scriptsize[24]--[27], [29], [30]} & 
\textbf{Ours} \\
\midrule
Nonuniform segments           & \xmark & \cmark & \cmark \\
Extensible to new functions   & \xmark & \cmark & \cmark \\
Optimized mixed degree        & \xmark & \xmark & \cmark \\
Exploit locality              & \xmark & \xmark & \cmark \\
Algorithm--HW co-optimization & \cmark & \xmark & \cmark \\
\bottomrule
\end{tabular}
\end{table}

\setlength{\textfloatsep}{12pt}
\noindent\textbf{Challenge 1: Inefficiency of uniform polynomial degrees.}
While Flex-SFU~\cite{reggiani2023flex} utilizes stochastic gradient descent to enable nonuniform breakpoint locations, it remains restricted to a uniform polynomial degree (linear or quadratic) across all segments. Similarly, PACE~\cite{prasad2025pace} and PACE-Lite~\cite{prasad2025lite} allow degree selection per function but enforce that the chosen degree is uniform across every segment. This creates a resource-accuracy mismatch since a high-degree polynomial is required for high-curvature regions, but results in significant hardware overhead when applied to nearly linear segments. The challenge lies in simultaneously optimizing nonuniform breakpoint locations and per-segment degree to minimize the total area and error.

\noindent\textbf{Challenge 2: Multi-function provisioning without joint optimization.}
Existing frameworks like QPA~\cite{geng2023qpa}, NPLA~\cite{lu2023efficient}, and UFP~\cite{zheng2026ufp} determine coefficient bit-widths, segment counts and polynomial degrees for each target function in isolation. This leads to an architectural optimization gap in unified hardware contexts, where the SFU must be sized for the most numerically demanding function, causing redundant area for simpler ones. The challenge is to perform a joint cross-function hardware search that balances the disparate area-to-accuracy sensitivities of multiple nonlinear functions within a single, unified resource budget.

\noindent\textbf{Challenge 3: Area scaling in parallel multi-lane coefficient storage.}
As Transformer models scale, the area footprint of SFUs becomes the primary bottleneck due to linear coefficient replication. Current designs such as PACE-Lite~\cite{prasad2025lite} and Hong et al.~\cite{hong2024reconfigurable} primarily focus on single-lane or isolated-lane optimizations and do not account for resource reuse or data locality across multiple lanes, leading to the total area scaling linearly with lane count. While Flex-SFU~\cite{reggiani2023flex} attempts to share coefficient storage through multiple access ports, the resulting routing congestion and port contention scale poorly with massive parallelism. The critical architectural challenge is addressing this linear area growth by exploiting input locality across lanes to enable high-efficiency coefficient sharing.

We address the above-mentioned challenges using an algorithm-hardware co-design framework that jointly resolves all three limitations within a unified SFU design.  
On the algorithmic side, Uni-SFU employs a mixed-degree piecewise polynomial approximation with nonuniform breakpoints (Section~\ref{subsec:IV-B},~\ref{subsec:IV-C}). 
Furthermore, it performs a joint cross-function search to identify the hardware configuration with minimal area that meets the accuracy requirements for all target functions (Section~\ref{subsec:IV-D} and ~\ref{subsec:IV-E}). On the hardware side, a multi-lane architecture with a Local Cache exploits the high input locality of activations to eliminate redundant coefficient replication across lanes, breaking the linear area scaling bottleneck (Section~\ref{sec:V}).


%% file: text/III-Uni-SFU.tex
\section{Uni-SFU Algorithm-HW Codesign Framework}\label{sec:IV}
\subsection{Overview}\label{subsec:IV-A}
The inputs to the Uni-SFU framework are a list of target nonlinear activations to be supported by the optimized hardware and the allowed global maximum error, as shown in Figure~\ref{fig:overview}. 
The novel aspects of this step are 1) identifying the optimum set of nonuniform segments and 2) approximation of each segment using mixed-degree polynomials. 
For instance, the illustrative example in Figure~\ref{fig:overview} divides the SiLU into 11 nonuniform segments and approximates 2 of them with constant (Z), 4 with linear (L), 2 with quadratic (Q), and 3 with cubic (C) polynomials.

\change{The first step of the proposed framework finds \textit{all Pareto-optimal solutions} that meet the global maximum error constraint using a Dynamic Programming (DP) algorithm.
These per-function Pareto sets form the interface between this stage and the subsequent cross-function best-first search. 
The second step uses \change{per-function Pareto sets} to identify the hardware configuration that meets the \textit{maximum error constraint across all functions}. 
In summary, the DP algorithm in the first stage and the Best-first Dijkstra search in the second stage operate sequentially to solve different aspects of the overall optimization problem. The DP stage is applied to each activation function independently to filter the design space and generate a Pareto-optimal candidate set. 
The Best-first Dijkstra search then traverses the graph formed by discrete Pareto sets to search for a joint hardware implementation that satisfies all target functions under user-specified error constraints with the minimum total area.}

\change{A key novelty of the second} step is co-optimizing the error of each target function simultaneously using hardware area estimates obtained by our RTL implementation and synthesis with GF 22 nm technology. 
In contrast, existing approaches use the number of segments as a proxy for hardware area, which is an invalid assumption~\cite{prasad2025pace, prasad2025lite}. 

\begin{figure}[t]
\centering
\includegraphics[width=0.9\linewidth]{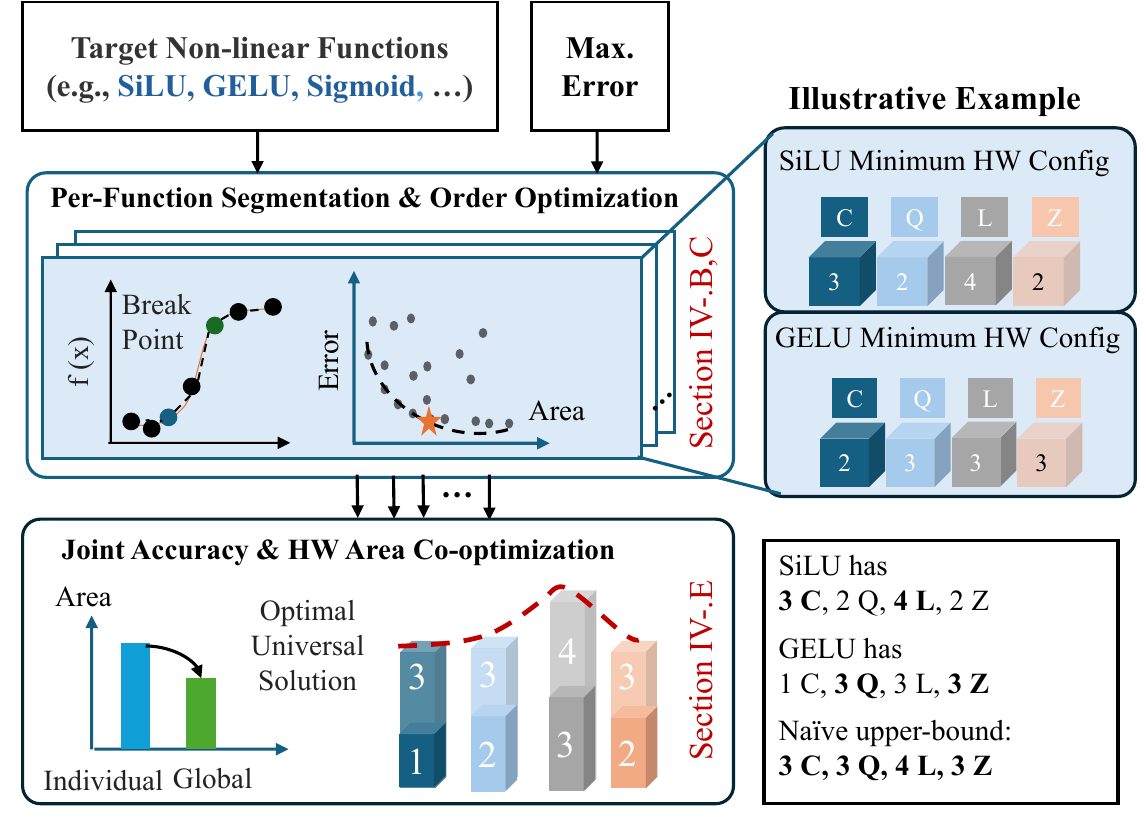}
\vspace{-2mm}
\caption{The proposed framework takes target nonlinear activation functions as input, derives per-function minimum hardware configurations Section~\ref{subsec:IV-C}, and merges them into a single universal configuration in Section~\ref{subsec:IV-E} that is smaller than the na\"{i}ve element-wise maximum.}
\label{fig:overview}
\end{figure}


\begin{figure*}[t]
\centering
\includegraphics[width=0.90\linewidth]{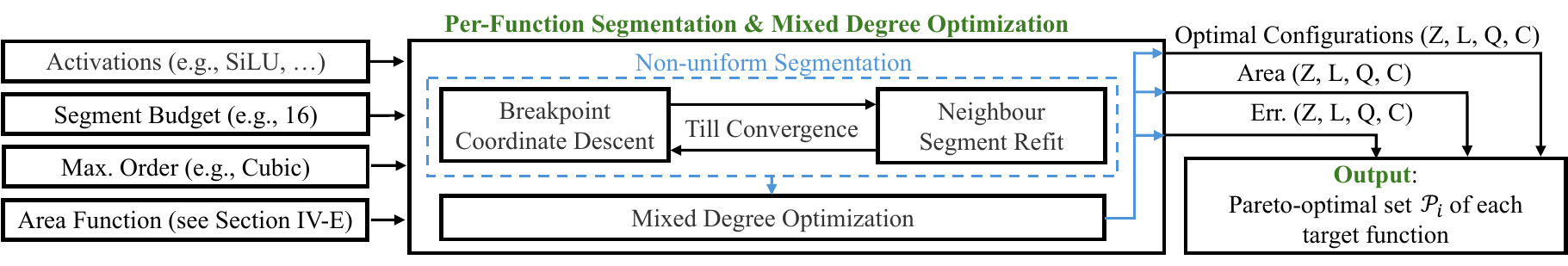}
\vspace{-3mm}
\caption{Description of the Per-Function Segmentation \& Mixed Degree Optimization within the Uni-SFU HW/SW co-optimization framework (Section~\ref{subsec:IV-C}).}
\label{fig:sw-stage1}
\vspace{-4mm}
\end{figure*}

\vspace{-2mm}
\subsection{Preliminaries} \label{subsec:prelim}
\label{subsec:IV-B}

Let $\mathcal{S}_A$ be the set of target activation functions and
$[x_{\min},x_{\max}]$ be their common domain.
We denote the $i^\mathrm{th}$ activation function by
$y(i,x)\in\mathcal{S}_A$,
where $i\in\mathbb{Z},\ 1\le i\le |\mathcal{S}_A|$,
and $x\in[x_{\min},x_{\max}]$.

Suppose a function is divided into $K$ nonuniform segments
denoted by $y_j(i,x)$, $i, j \in \mathbb{Z} , 1 \leq j \leq K$.
The $x-$coordinates of the breakpoints are given by 
$\{x_{min}, x_1, \dots, x_{K-1}, x_{max}\}$, 
such that the first segment starts at $x_{\min}$ and the last segment ends at
$x_{\max}$. 
The proposed Uni-SFU framework finds the intermediate points $\{x_1, \ldots, x_{K-1}\}$. 

A given segment could be approximated with a cubic ($3^\mathrm{rd}$-degree),
quadratic ($2^\mathrm{nd}$-degree), linear ($1^\mathrm{st}$-degree), or 
constant polynomial. 
$\thickhat{y}_{j,d}(i,x)$ denotes the approximation for segment $j$ of the $i^\mathrm{th}$ function with degree $d\in\{0,1,2,3\}$.
The corresponding MSE between the original segment and its approximation is:
\begin{equation} \label{eqn:err}
    e_{j,d}(i) = \frac{1}{x_{j} - x_{j-1}} \int_{x_{j-1}}^{x_{j}} \big(y_j(i,x) - \thickhat{y}_{j,d}(i,x)\big)^2 \,dx
\end{equation}
for $1 \leq j \leq K$, where $x_0 = x_{min}$ and $x_K = x_{max}$.

Since each segment can be independently approximated with a 0 to $3^\mathrm{rd}$ degree polynomial, 
there are $4^K$ possible combinations. 
A given approximation can be uniquely specified by the \textbf{degree tuple} defined as follows.

\begin{definition} 
\textbf{Degree Tuple} is defined as the tuple
$$\mathbf{d} = (d_1, d_2, \ldots, d_K), \quad d_j \in \{0, 1, 2, 3\} $$
where $d_j$ is the degree of the polynomial that approximates segment $j \in \{1,2,\ldots,K\}$.
\end{definition}

Different degree tuples (i.e., implementations) may lead to the same hardware cost. 
For example, $\mathbf{d}=(1,2,3)$, $\mathbf{d}=(2,1,3)$, and $\mathbf{d}=(3,2,1)$ all imply one linear, one quadratic, and one cubic segment. 
\change{They have the same area cost} because the hardware depends only on how many segments use each polynomial degree, regardless of the specific segment locations.  
To exploit this symmetry and have a compact representation that can be used in a search tree, 
we define the marginal and cumulative degree distributions as follows.

\begin{definition} \label{def:degree_dist} 
The \textbf{Marginal Degree Distribution} is defined as the tuple
\[
\mathbf{m_d} = (m_Z, m_L, m_Q, m_C),
\]
where $m_Z$, $m_L$, $m_Q$, and $m_C$ denote the number of segments fitted using constant (zero-degree), linear, quadratic, and cubic polynomials, respectively.
\end{definition}

\begin{definition} \label{def:Cumu_degree_count} 
The \textbf{Cumulative Degree Distribution} is defined as the tuple $\mathbf{c_d} = (c_{d1}, c_{d2}, c_{d3})$, where $c_{d3}$ is the number of segments fitted using degree 3, 
$c_{d2}$ is the number of segments fitted using degree 2 or 3,
$c_{d1}$ is the number of segments fitted using degree 1 or higher.
$c_0=K$ is implicit since all segments are of degree 0 or higher, and all have a constant term.
\end{definition}

\subsection{Per-Function Segmentation \& Mixed Degree Optimization}
\label{subsec:IV-C}
\noindent \textbf{Nonuniform segmentation:} 
The first step of Uni-SFU is to find the optimum nonlinear segmentation for each nonlinear activation function, 
i.e., the locations of the breakpoints $\{x_1, \ldots, x_{K-1}\}$ (recall that $x_{min}$ and $x_{max}$ are fixed). 
To this end, it first initializes the breakpoints uniformly and assumes that the highest allowed degree is used across all segments\footnote{It is cubic in this work since using higher degree polynomials do not provide significant benefits. The proposed algorithm can also be used with higher degrees. Mixed degrees are assigned in the second step.}. 
This starting point is a commonly used segmentation approach, but it is not necessarily optimum. 
\change{Therefore, Uni-SFU applies coordinate descent optimization over breakpoint locations to iteratively shift each breakpoint $x_j$ within the range $[x_{j}, x_{j+1}]$, as outlined in Figure~\ref{fig:sw-stage1}.}
The first iteration starts by finding the $x_1$ that minimizes the fitting error across neighboring segments by moving it on a discretized grid, assuming all other breakpoints remain at their starting positions. 
Then, Uni-SFU moves the second breakpoint $x_2$ to minimize the MSE while freezing the rest of the breakpoints. This local optimization process is applied to each intermediate breakpoint. Then, the entire iteration is repeated until either none of the breakpoints move (i.e., they have converged) or the maximum allowed number of iterations is reached. 
In our evaluations, all functions converged within 20 iterations over the full set of breakpoints.

\noindent \textbf{Mixed Degree Optimization:} 
\change{The mixed-degree optimization step seeks degree assignments that balance approximation error and hardware area. Since the design space is large ($4^K$ for up to cubic degree)}, prior approaches resort to a uniform degree distribution (e.g., all segments are linear/quadratic)~\cite{reggiani2023flex, prasad2025pace, prasad2025lite, lu2023efficient, dong2020plac, lyu2021ml}. 
In contrast, we propose an efficient DP algorithm with the help of the \textit{cumulative degree distribution} (Definition~\ref{def:Cumu_degree_count}).
We use the cumulative degree distribution $\mathbf{c_d}$ as the DP state since each new segment assignment updates it additively and monotonically: a degree-$d$ segment contributes one count to every cumulative level up to $d$. For example, a 2nd degree segment fit would increment $c_{d1}, c_{d2}$ by one but does not increase $c_{d3}$. Hence, $\mathbf{c_d}$ provides a compact and natural state representation for the DP recursion.

For a given segment, using higher-degree polynomials yields a lower approximation error.
Therefore, the segment-wise fitting error of segment $j$ satisfies the following relation:
\[e_{j,3}\le e_{j,2}\le e_{j,1}\le e_{j,0}.\]
Thus, assigning cubic polynomials to all segments yields the smallest approximation error, but also incurs the largest hardware cost due to the increased number of coefficients and computations.
In contrast, lower-degree assignments reduce hardware cost but generally increase approximation error.
Consequently, different degree assignments produce different area--error trade-offs. 
The objective of the mixed-degree optimization step is to identify the Pareto-optimal assignments.

\vspace{1mm}
\begin{definition}\label{def:partial_degree_assignment} 
A \textbf{partial degree assignment} after processing the first $j$ segments is denoted as the tuple:
\[
\pi^{(j)}=(d_1,d_2,\ldots,d_j),
\qquad d_\ell\in\{0,1,2,3\}\ \text{for } \ell=1,\ldots,j.
\]
\end{definition}
Its cumulative approximation error is
\[
E(\pi^{(j)})=\sum_{\ell=1}^{j} w_\ell\,e_{\ell,d_\ell},
\qquad
\mathrm{where~} w_\ell=\frac{x_\ell-x_{\ell-1}}{x_{\max}-x_{\min}} 
\]
and $e_{\ell,d_\ell}$ defined in Equation~\ref{eqn:err}. The cumulative degree-count reached by $\pi^{(j)}$ is denoted by $\mathbf{c}(\pi^{(j)})$, as summarized in Table~\ref{tab:combined_notation}.

\vspace{1mm}
\begin{lemma}\label{def:state_dominance} 
Let $\pi_a^{(j)}$ and $\pi_b^{(j)}$ be two partial degree assignments in the same DP layer $j$.
Partial assignment $\pi_a^{(j)}$ \textbf{dominates} $\pi_b^{(j)}$ ($\pi_a^{(j)} \succ_{\mathrm{st}} \pi_b^{(j)}$)
if 
\[
\mathbf{c}(\pi_a^{(j)})=\mathbf{c}(\pi_b^{(j)})
\ \text{and}\
E(\pi_a^{(j)}) < E(\pi_b^{(j)}).
\]
\end{lemma}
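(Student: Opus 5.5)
The lemma is phrased as a definition, but its content is a pruning claim for the DP. If $\mathbf{c}(\pi_a^{(j)})=\mathbf{c}(\pi_b^{(j)})$ and $E(\pi_a^{(j)})<E(\pi_b^{(j)})$, then $\pi_b^{(j)}$ can be discarded from layer $j$ without losing any Pareto-optimal full assignment. I will prove this by an exchange argument on completions: every completion of $\pi_b^{(j)}$ is matched by the same completion of $\pi_a^{(j)}$, and the matched assignment is no worse in area and strictly better in error.

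\textbf{Step 1 (additivity).} Fix any suffix $\sigma=(d_{j+1},\ldots,d_K)$ and write $\pi\circ\sigma$ for the concatenation. From the definition of $E$ as a weighted sum over segments,
\[
E(\pi\circ\sigma)=E(\pi)+\sum_{\ell=j+1}^{K} w_\ell\,e_{\ell,d_\ell}.
\]
The suffix term depends only on $\sigma$ and the fixed breakpoints, not on $\pi$. The same additivity holds for the cumulative degree distribution (Definition~\ref{def:Cumu_degree_count}). A degree-$d$ segment adds one to each $c_{d k}$ with $k\le d$, so $\mathbf{c}(\pi\circ\sigma)=\mathbf{c}(\pi)+\mathbf{c}(\sigma)$.

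\textbf{Step 2 (area depends only on $\mathbf{c}$).} Area is a function of the marginal degree distribution only; this is the symmetry noted before Definition~\ref{def:degree_dist}. The marginal distribution is recovered from the cumulative one by differences: $m_C=c_{d3}$, $m_Q=c_{d2}-c_{d3}$, $m_L=c_{d1}-c_{d2}$, and $m_Z=K-c_{d1}$. Hence area is a function of $\mathbf{c}$ at the full layer $K$. Combined with Step 1, $\mathbf{c}(\pi_a\circ\sigma)=\mathbf{c}(\pi_b\circ\sigma)$, so the two completions have equal area. Their errors satisfy $E(\pi_a\circ\sigma)<E(\pi_b\circ\sigma)$, because the common suffix term cancels.

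\textbf{Step 3 (conclusion).} Every full assignment extending $\pi_b^{(j)}$ is strictly dominated in the area--error sense by the corresponding extension of $\pi_a^{(j)}$. So no Pareto-optimal assignment has $\pi_b^{(j)}$ as its prefix. This justifies keeping, for each state $\mathbf{c}$ in layer $j$, only the minimum-error partial assignment, which is the Bellman optimality principle for this DP.

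The main obstacle is Step 2. The argument relies on the modelling assumption that the RTL-derived area depends only on per-degree segment counts, not on segment positions or on coefficient values and bit widths. I would state this explicitly as the hypothesis inherited from the discussion preceding Definition~\ref{def:degree_dist}. Ties, where the errors are equal, are excluded by the strict inequality and can be broken arbitrarily.
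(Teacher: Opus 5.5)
Your proof is correct, and it proves more than the paper's proof does. The paper's argument is two lines. Equal cumulative state $\mathbf{c}$ is taken to mean equal area cost, and strictly smaller $E$ at equal area is domination by definition. That treats area as already defined on partial assignments. The exchange step, splicing a better prefix into a longer assignment, is deferred to Theorem~\ref{thm:optimal_substructure}, and there it is done only for a one-segment extension.

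You instead evaluate area only on complete assignments at layer $K$. You show that both $E$ and $\mathbf{c}$ are additive under concatenation. You then conclude that every completion of $\pi_b^{(j)}$ is strictly dominated by the same completion of $\pi_a^{(j)}$. This makes the paper's informal ``equal area'' claim for partial assignments precise. It also directly gives the pruning guarantee that the paper only asserts after Theorem~\ref{thm:optimal_substructure}: keeping one minimum-error representative per state loses no terminal optimum. In effect, your lemma absorbs the optimal-substructure argument.

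One small tightening for Step~2: do not rest only on the informal symmetry remark. The paper's concrete model $A=A_{\mathrm{base}}(c_0,c_1,c_2,c_3)+A_{\text{compute}}(\max(\mathbf{d}))$ is itself a function of $\mathbf{c}$ at layer $K$. This holds because $c_0=K$, and $\max(\mathbf{d})$ is the largest $k$ with $c_{dk}>0$ (or $0$ if $c_{d1}=0$).
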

\begin{proof}
If $\mathbf{c}(\pi_a^{(j)})=\mathbf{c}(\pi_b^{(j)})$, these partial assignments have equal area cost.
If $E(\pi_a^{(j)}) < E(\pi_b^{(j)})$, assignment $a$ yields lower error at equal area cost and dominates assignment $b$.
\end{proof}

\begin{table}[!t]
\centering
\renewcommand{\arraystretch}{1.2}
\setlength{\tabcolsep}{2pt}
\caption{Notation for Uni-SFU framework.}
\label{tab:combined_notation}
\vspace{-2mm}
\small
\begin{tabular}{p{0.16\columnwidth}<{\centering} p{0.32\columnwidth} | p{0.11\columnwidth}<{\centering} p{0.37\columnwidth}}
\hline
\textbf{Sym.} & \textbf{Description} & \textbf{Sym.} & \textbf{Description} \\
\hline
$\mathcal{S}_A$ & Target function set & $\mathbf{d}$ & Degree tuple \\
{\scriptsize $[x_{\min},x_{\max}]$} & Input domain & $\mathbf{m_d}$ & Marginal degree dist. \\
$i$ & Function index & $\mathbf{c_d}$ & Cumulative degree dist. \\
$y(i,x)$ & $i^{\text{th}}$ target function & $\pi^{(j)}$ & Partial degree assign. \\
$K$ & Number of segments & $E(\pi^{(j)})$ & Cumulative error \\
$j$ & Segment index & $\mathbf{c}(\pi^{(j)})$ & DP state \\
$x_j$ & $j^{\text{th}}$ breakpoint & $\mathbf{v}(d)$ & State increment vector \\
$y_j(i,x)$ & Function on segment $j$ & $V_j(\mathbf{c})$ & DP value function \\
$d$ & Polynomial degree & $\mathcal{P}_i$ & Full Pareto frontier \\
$\thickhat{y}_{j,d}(i,x)$ & Degree-$d$ approx. & $\mathcal{Q}_i$ & Error-feasible subset \\
$e_{j,d}(i)$ & Approx. error & $\mathbf{u}$ & Hardware bound \\
\hline
\end{tabular}
\end{table}

\noindent\textbf{Dynamic Programming Procedure:}
The proposed mixed-degree optimization is formulated as a $K$-horizon sequential decision process. Each step $j \in \{1, \dots, K\}$ corresponds to assigning a polynomial degree to the $j^\mathrm{th}$ segment.
The cumulative degree count $\mathbf{c}(\pi^{(j)})$ is chosen as the state at step $j$, as this avoids an exhaustive $4^K$ search and yields the optimal solution, as proven below.

We first define the states for the DP problem. At each decision step for segment $j$, a polynomial degree $d_j \in \{0, 1, 2, 3\}$ is assigned to the $j^\mathrm{th}$ segment. The state at step $j$ is strictly defined by the cumulative degree distribution reached by the partial assignment $\pi^{(j)}$, denoted as $\mathbf{c}(\pi^{(j)}) = (c_{d1}, c_{d2}, c_{d3})$. When a degree $d_j$ is selected, the state transitions additively: $\mathbf{c}(\pi^{(j)}) = \mathbf{c}(\pi^{(j-1)}) + \mathbf{v}(d_j)$, where $\mathbf{v}(d_j)$ is the increment vector for the cumulative counts (e.g., $\mathbf{v}(2) = (1, 1, 0)$ and $\mathbf{v}(0) = (0, 0, 0)$).

Let $V_j(\mathbf{c})$ denote the minimum cumulative approximation error among all partial degree assignments that reach cumulative state $\mathbf{c}$ after processing $j$ segments, i.e.,
\[
V_j(\mathbf{c})
=
\min_{\pi^{(j)}:\,\mathbf{c}(\pi^{(j)})=\mathbf{c}} E(\pi^{(j)}).
\]
The base case is initialized as $V_0((0,0,0))=0$ and $V_0(\mathbf{c})=\infty$ for all other states. For $1 \le j \le K$, the Bellman recursion is
\[
V_j(\mathbf{c})
=
\min_{d \in \{0,1,2,3\}}
\Big[
V_{j-1}\big(\mathbf{c}-\mathbf{v}(d)\big)+w_j\,e_{j,d}
\Big].
\]
where $w_j=\frac{x_j-x_{j-1}}{x_{\max}-x_{\min}}$ denotes the normalized length of segment $j$.
Any transition resulting in an invalid predecessor state is assigned infinite cost. Starting from $V_0((0,0,0))$, the DP records each reachable state at every layer. For each unique state, the algorithm retains only the partial assignment with the \textit{minimum cumulative approximation error}, storing it alongside a backpointer that records the predecessor state and the selected degree. Upon reaching the final horizon $K$, the algorithm yields a set of valid terminal states, which represent the minimum possible error for each unique hardware configuration. By tracing the backpointers from these terminal states, the algorithm reconstructs the optimal segment-level degree assignments $\mathbf{d}^*$. 

After the terminal DP states are collected, we perform another round of breakpoint movement for all of the segment fits. This process aims to reduce the segment bias introduced during initialization, where non-uniform breakpoints were determined under the assumption that every segment would use the maximum allowable degree. These representative assignments are then evaluated to extract the final Pareto front, providing tradeoffs between hardware area and approximation accuracy.

\begin{theorem}[\textit{Optimal Substructure}]
\label{thm:optimal_substructure}
Let $\pi_j^*$ be an optimal partial degree assignment for the first $j$ segments that reaches a cumulative degree state $\mathbf{c}$ with minimal error $V_j(\mathbf{c})$. If $\pi_j^*$ assigns degree $d_j^*$ to the $j^\mathrm{th}$ segment, then its predecessor $\pi_{j-1}^*$ must be an optimal partial assignment for reaching the predecessor state $\mathbf{c}' = \mathbf{c} - \mathbf{v}(d_j^*)$ at step $j-1$.
\end{theorem}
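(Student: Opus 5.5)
The argument is the standard cut-and-paste (exchange) argument for optimal substructure. It rests on two properties of the DP formulation from Section~\ref{subsec:IV-C}. First, the cumulative error is additively separable: $E(\pi^{(j)}) = E(\pi^{(j-1)}) + w_j e_{j,d_j}$, where the last term depends only on the segment index $j$ and the chosen degree $d_j$, and not on the earlier degrees. Second, the state update $\mathbf{c}(\pi^{(j)}) = \mathbf{c}(\pi^{(j-1)}) + \mathbf{v}(d_j)$ is likewise determined by $d_j$ alone.

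The first step is to write $\pi_j^* = (\pi_{j-1}^*, d_j^*)$, where $\pi_{j-1}^*$ is the length-$(j-1)$ prefix. By the state update, this prefix reaches $\mathbf{c}' = \mathbf{c} - \mathbf{v}(d_j^*)$. By additivity,
\[
V_j(\mathbf{c}) = E(\pi_j^*) = E(\pi_{j-1}^*) + w_j e_{j,d_j^*} .
\]

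Next, argue by contradiction. Suppose $\pi_{j-1}^*$ is not optimal for $\mathbf{c}'$. Then some partial assignment $\sigma^{(j-1)}$ satisfies $\mathbf{c}(\sigma^{(j-1)}) = \mathbf{c}'$ and $E(\sigma^{(j-1)}) < E(\pi_{j-1}^*)$. In the language of Lemma~\ref{def:state_dominance}, $\sigma^{(j-1)}$ dominates $\pi_{j-1}^*$. Form the splice $\tilde\pi = (\sigma^{(j-1)}, d_j^*)$. Its state is $\mathbf{c}' + \mathbf{v}(d_j^*) = \mathbf{c}$, so it lies in the feasible set over which $V_j(\mathbf{c})$ is minimized. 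Its error satisfies
\[
E(\tilde\pi) = E(\sigma^{(j-1)}) + w_j e_{j,d_j^*} < E(\pi_{j-1}^*) + w_j e_{j,d_j^*} = V_j(\mathbf{c}).
\]
This contradicts the definition of $V_j(\mathbf{c})$ as the minimum. Hence $E(\pi_{j-1}^*) = V_{j-1}(\mathbf{c}')$. As a corollary, the Bellman recursion is exact: the minimum over $d$ of $V_{j-1}(\mathbf{c}-\mathbf{v}(d)) + w_j e_{j,d}$ equals $V_j(\mathbf{c})$. Keeping one minimum-error representative per state therefore loses nothing.

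The main obstacle is justifying that the term $w_j e_{j,d}$ is independent of the prefix. This holds because, during the DP, the breakpoints are fixed from the coordinate-descent stage, so each segment's fit depends only on its own interval and degree. The later re-optimization of breakpoints happens after the DP and lies outside the statement. I would note this explicitly so that the splice $\tilde\pi$ is guaranteed to incur exactly the same last-segment cost as $\pi_j^*$. One more subtlety is that $\mathbf{c}'$ must be a valid state with nonnegative, monotone counts. This follows because it is reached by an actual prefix, so $V_{j-1}(\mathbf{c}') < \infty$.
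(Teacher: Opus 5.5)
Your proof is correct and follows the same cut-and-paste contradiction as the paper. You replace the prefix $\pi_{j-1}^*$ with a strictly better prefix reaching $\mathbf{c}'$ and append $d_j^*$, which yields an assignment reaching $\mathbf{c}$ with error below $V_j(\mathbf{c})$. The paper applies Lemma~\ref{def:state_dominance} to the spliced full assignment rather than to the prefix, which is only a cosmetic difference; your remark that $w_j e_{j,d}$ does not depend on the prefix, because the breakpoints are frozen during the DP, makes explicit an assumption the paper leaves implicit.
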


\begin{proof}
Assume the optimal assignment $\pi_j^*$ reaching state $\mathbf{c}$ has a minimal error $V_j(\mathbf{c}) = E(\pi_{j-1}^*) + w_j\,e_{j, d_j^*}$. If the partial assignment $\pi_{j-1}^*$ were not optimal for its terminal state $\mathbf{c}' = \mathbf{c} - \mathbf{v}(d_j^*)$, there would exist a valid assignment $\hat{\pi}_{j-1}$ reaching $\mathbf{c}'$ with strictly lower error $E(\hat{\pi}_{j-1}) < E(\pi_{j-1}^*)$. Substituting $\hat{\pi}_{j-1}$ into the full assignment yields $\hat{\pi}_j$, with a total error $E(\hat{\pi}_{j-1}) + w_j\,e_{j, d_j^*} < V_j(\mathbf{c})$. By Lemma~\ref{def:state_dominance}, $\hat{\pi}_{j}$ dominates $\pi_j^*$, which contradicts the optimality of $\pi_j^*$. Thus, the Bellman equation optimally constructs the solution for every state $\mathbf{c}$.
\end{proof}

Since the problem exhibits this optimal substructure, finding the minimum error for each cumulative degree state at every step $j$ guarantees that the global minimum error is preserved for every possible hardware profile at the final step $K$.
\begin{figure*}
\centering 
\includegraphics[width=0.95\linewidth]{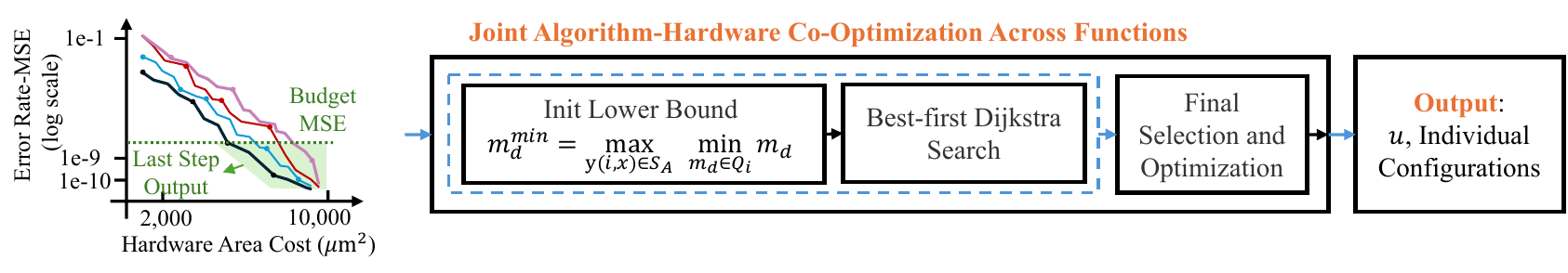} 
\vspace{-5mm} 
\caption{Uni-SFU joint algorithm-hardware co-optimization workflow (Sec.~\ref{ssec:alg-hw-coop}).} 
\label{fig:sw-stage2} 
\vspace{-2mm} 
\end{figure*}

\subsection{Hardware Area Cost Model}\label{subsec:IV-D}
Prior work models hardware cost solely by segment count, which primarily 
governs the coefficient storage area~\cite{reggiani2023flex,gonzalez2021hardware}. 
\change{This abstraction breaks when mixed degree polynomials are used and it fails to account for the compute unit overhead}.

\change{We propose an accurate and broadly applicable area model composed of two groups: (1) the compute unit for polynomial evaluation, and (2) the storage for polynomial coefficients. 
The proposed area model focuses strictly on intrinsic architecture complexity while omitting implementation-specific optimizations such as cache hits (discussed in Section~\ref{sec:V}). Hence, it is  generic and naturally extends to alternative function approximation methods (e.g., Fourier expansion or Newton-Raphson iterations) that can be fundamentally characterized by storage and computing elements.}

\noindent \textbf{Compute Unit Area Model:}
The polynomial evaluation unit consists of serially connected 32-bit MAC and ADD units. To support mixed-order evaluation while maintaining one output per cycle, the number of MAC--ADD pairs is set equal to the maximum polynomial degree present across all segments: one pair for linear, two for quadratic, three for cubic, and a single ADD unit for constant segments. Each configuration is synthesized, yielding $A_{\text{compute}}(\max(\mathbf{d}))$ as a discrete lookup over area cost for different degrees.

\noindent \textbf{Storage Area Model:}
\change{The storage unit uses a multi-level LUT structure (detailed in Section~\ref{subsec:V-B}), where coefficients of different polynomial degrees are stored in separate LUTs to minimize area. Since all segments have a constant coefficient, its LUT has $K$ entries. The LUTs for linear, quadratic, and cubic coefficients have progressively smaller LUTs since fewer segments will need them.}
We sweep the number of segments from 8 to 16 with a step size of 2, including 8, 10, 12, 14, and 16 segments, and enumerate all hardware combinations spanning from entirely constant degrees to entirely cubic degrees for each segment. Out of 2555 unique hardware configurations in this design space, 800 samples are uniformly subsampled and subsequently implemented in RTL, including 50 additional random configurations. All samples are synthesized in the GF 22nm technology, and polynomial regression is applied to yield the fitted base area model:
\vspace{-1mm}
\begin{align}
A_{\mathrm{base}} = \alpha(c_0+c_1) + \beta\,c_2 + \gamma\,c_3 
                  - \delta(c_0 c_1 c_2 c_3) + \epsilon,
\end{align}
where $\alpha, \beta, \gamma, \delta, \epsilon$ are coefficients fitted from 
RTL synthesis results. \change{Consequently, the overall area is modeled as:}
\begin{equation}
A = A_{\mathrm{base}}(c_0,c_1,c_2,c_3) + A_{\text{compute}}(\max(\mathbf{d}))
\end{equation}

\noindent \textbf{Model Validation:}
To verify the fitted model, 30 held-out configurations not seen during fitting are synthesized in GF~22nm and compared against model predictions. The resulting Mean Absolute Percentage Error (MAPE) is 0.83\%, corresponding to an absolute area deviation of $20.19~\mu\text{m}^2$. This error is negligible relative to the tens-of-thousands-of-$\mu\text{m}^2$ scale of the full SFU module, confirming that the model provides sufficient fidelity for the hardware-aware co-design search.

\subsection{Joint Algorithm-Hardware Co-Optimization Across Functions} \label{ssec:alg-hw-coop}
The second step of Uni-SFU operates only on the subset of Pareto-optimal solutions
$\mathcal{P}_i$ that satisfy the maximum error requirement, denoted as
$\mathcal{Q}_i$ (produced by the per-function segmentation \& mixed-degree
optimization step in Section~\ref{subsec:IV-C}). Its goal is to find a
unified hardware bound $\mathbf{u}$ that meets all target functions with
minimum area, as illustrated in Figure~\ref{fig:sw-stage2}.

\noindent\textbf{Lower Bound Universal SFU Hardware:}\label{subsec:IV-E}
As an illustrative example, let us assume the Pareto-optimal set for the function $y(i,x) \in \mathcal{S}_A$ has two possible approximations:
\[
\mathcal{Q}_i = \{(3, 7, 4, 1), (4, 5, 3, 2)\},
\]
where each tuple is ordered as $(Z,L,Q,C)$. That is, the first configuration has $3$ constant, $7$ linear, $4$ quadratic, and $1$ cubic segments, while the second has $4$ constant, $5$ linear, $3$ quadratic, and $2$ cubic segments. A lower-bound marginal degree distribution for this function can be found as
\[
\mathbf{m_d^{min}} = \big(\min(3,4), \min(7,5), \min(4,3), \min(1,2)\big),
\]
\change{since none of the approximations in the Pareto set can fit in a hardware configuration with fewer segments than this bound.}

\change{In general, we must consider a set of activation functions $\mathcal{S}_A$ that need to be implemented by the SFU universal hardware. 
The universal hardware must be large enough to implement any $y(i,x) \in \mathcal{S}_A$. 
That is, the lower bound across all functions is:}
\begin{equation}
\mathbf{m_d^{min}} =\max_{y(i,x) \in \mathcal{S}_A}\ \min_{\mathbf{m_d}\in\mathcal{Q}_i} \mathbf{m_d},
\end{equation}
where each $\mathbf{m_d}\in\mathcal{Q}_i$ is ordered as $(Z,L,Q,C)$, i.e., constant, linear, quadratic, and cubic segment counts.

The lower bound excludes clearly impossible hardware configurations, since any smaller configuration would fail to support at least one target function. 
Therefore, Uni-SFU performs an efficient search in the bounded degree-distribution space starting from the lower bound, as described next.

\noindent\textbf{Algorithm for finding the minimum Universal SFU:}
\change{Each search node is a candidate unified hardware bound $\mathbf{u} = (u_Z, u_L, u_Q, u_C)$, as illustrated in Figure~\ref{fig:s2_flowchart}. The estimated hardware area of each node is used as its priority (the lower the area, the higher the priority).}
The algorithm first inserts $\mathbf{u} = m_d^{min}$ (the current smallest hardware bound) into a priority queue keyed by area.
Then, it repeatedly removes the node with the smallest area and checks whether it can support all target functions, i.e., whether every function has at least one Pareto-optimal candidate that fits within that bound.
If so, the search stops and returns that node as the optimal hardware configuration.
Otherwise, the node is extended by generating up to four neighbors, obtained by increasing one of the entries $u_Z$, $u_L$, $u_Q$, or $u_C$ by one, provided that the new node remains within the upper bound. Then $\mathbf{u}_{\min}$ is set to the smallest area node in the priority queue. These neighbors are then inserted into the priority queue, and the process repeats.

\begin{figure}[t]
\centering
\includegraphics[width=\linewidth]{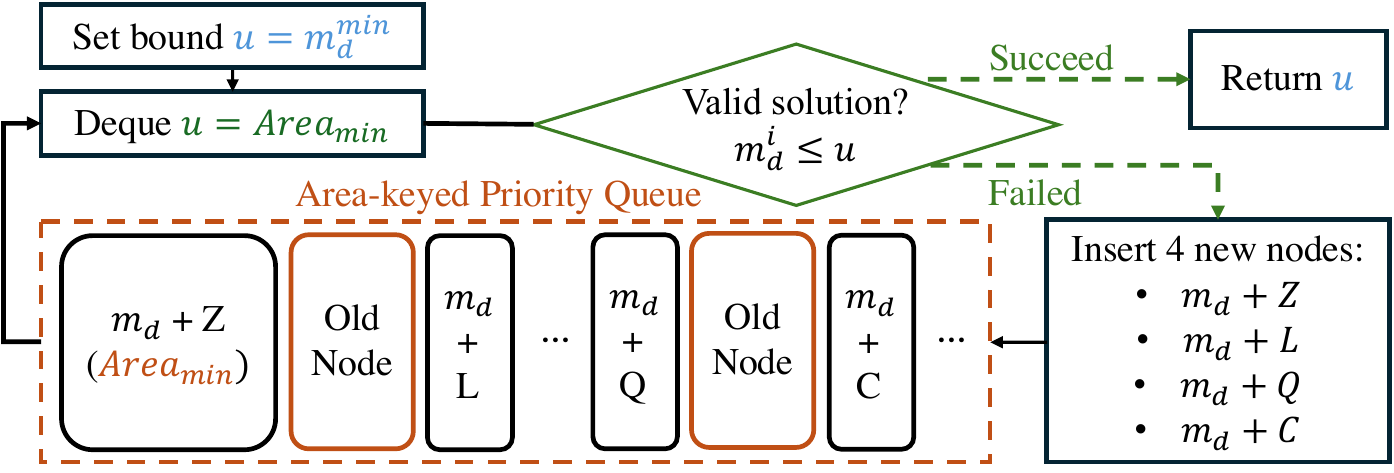}
\vspace{-6mm}
\caption{Flow chart of Best-first Dijkstra Search in the joint algorithm-hardware co-optimization workflow.}
\label{fig:s2_flowchart}
\vspace{-2mm}
\end{figure}

\noindent \textbf{Optimality Analysis:}
This best-first strategy is optimal \change{(within step 2)} and efficient because nodes are explored in increasing hardware area.
Once the first feasible node is found, no unexplored node with smaller area can still satisfy all target functions.
Thus, the first feasible node returned by the search is the minimum-area unified hardware configuration.
After the hardware configuration is determined, Uni-SFU reselects the candidate degree tuple for each target function, which is feasible under the common hardware bound and provides the lowest approximation MSE among all feasible candidates in its Pareto-optimal set.
The selected solution of each function is then further refined by re-optimizing the breakpoint locations while keeping the mixed-degree assignment fixed.
This final refinement improves the approximation quality without changing the hardware configuration obtained by the joint search.

%% file: text/IV-Hardware-Implementation.tex
\section{Proposed Universal SFU (Uni-SFU) Hardware}\label{sec:V}
\begin{figure*}[t]
\centering
\includegraphics[width=0.9\linewidth]{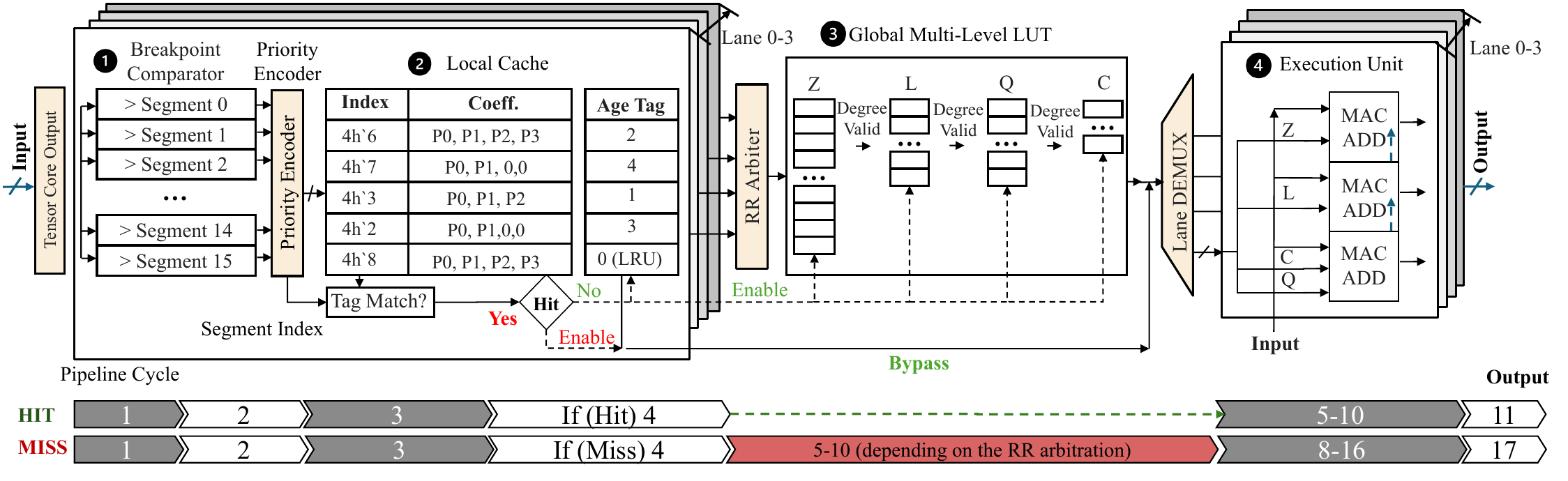}
\vspace{-3mm}
\caption{A four-lane (Lane 0--3) SFU hardware is shown as an example. The details of only Lane 0 are shown for clarity since all lanes have an identical structure.} \label{fig:hw_arch}
\vspace{-3mm}
\end{figure*}

\subsection{Architecture Description and Runtime Operation}\label{subsec:V-A}
The Uni-SFU hardware architecture comprises four units: 
\begin{enumerate}[leftmargin=*]
    \item \textit{Breakpoint Comparator} for active segment identification,
    \item \textit{Local Cache} for storing recently accessed coefficients,
    \item \textit{Global Multi-Level LUTs} for storing all polynomial coefficients for the actively supported nonlinear function, 
    \item \textit{Execution Unit} for 32-bit multiple accumulate operations.
\end{enumerate}

\noindent\textbf{Multi-Lane Operation:} 
Uni-SFU supports a multi-lane operation, as illustrated with \textit{four lanes} in Figure~\ref{fig:hw_arch}.
\change{In this four-lane example, four copies of Breakpoint Comparator, Local Cache, and Execution Unit can process four activation inputs in parallel.} 
Only the Global Multi-Level LUT has a single copy since it has a large footprint (45\% of the total area). 
In this way, Uni-SFU can receive and process as many inputs as the number of lanes in parallel.
The Breakpoint Comparator determines the segments they fall into in the first two cycles, 
while the tag filter checks whether these segments are in the Local Cache in the third cycle.
The segments whose coefficients are in the Local Cache are retrieved from the Local Cache and forwarded to the Execution Unit in the third and fourth cycles (the HIT scenario in Figure~\ref{fig:hw_arch}).
All inputs are processed \textit{in parallel without any stalls} if there is a hit in the Local Cache. 

If any input falls in a segment that is not in the Local Cache (a MISS scenario), the corresponding coefficients must be loaded from the Global Multi-Level LUT. 
These inputs experience a longer latency due to two factors. 
First, access from the Local Cache to the Global Multi-Level LUT requires an arbitration since multiple lanes may experience a miss simultaneously. 
A round-robin arbiter gives each Local Cache access to the Global Multi-Level LUT with a \textit{worst-case contention latency} of three cycles (number of lanes minus one). 
Second, retrieving the coefficients from the Global Multi-Level LUT and forwarding them to the corresponding Execution Unit takes two cycles longer than reading them from the Local Cache. 

Since there are as many Execution Units as the number of lanes, Uni-SFU can output multiple activations (four in our example) in parallel. In summary, one can view the global multi-level LUT as a physical channel shared by multiple Local Caches and Execution Units (like virtual channels). 
Since the global LUT is used only in case of a miss in the Local Cache (not a frequent event), Uni-SFU's performance approaches to the best-case latency and throughput (one output/lane/cycle), as we describe in Section~\ref{subsec:latency}.

\noindent\textbf{Runtime Loading of the Coefficients:} 
Uni-SFU can support an arbitrary number of activation functions, whose segments and coefficients are stored in memory. 
Since only one of them can be active at a time, its coefficients must be loaded to the Global Multi-Level LUT before it is used. 


\subsection{Descriptions of the Hardware Units}\label{subsec:V-B}

\noindent\textbf{Breakpoint Comparator:}
Placed at the input of the Uni-SFU, the Breakpoint Comparator receives the activation generated by the processing elements (e.g., TPU, NPU, or Tensor Core).
For a given nonlinear activation, it preloads the breakpoint locations $x_j$ in registers and keeps them unchanged before another function is enabled. 
It determines the segment index of each input, which also serves as the prefix key for the Local Cache lookup. 
To avoid a long critical path due to cascaded comparison, we evaluate the input against all breakpoints $x_j$ in parallel. Each comparator asserts logic high if the input exceeds $x_j$. In the subsequent cycle, a standard Priority Encoder directly identifies the highest asserted bit to determine the binary segment index. 
Breakpoint comparison and priority encoding are fully pipelined, with one-cycle latency for each operation.

\noindent\textbf{Local Cache:} The Local Cache stores the polynomial coefficients of recently accessed segments using a Least Recently Used (LRU) replacement policy. \change{Each cache row is configured to hold four 32-bit floating-point coefficients corresponding to the polynomial degrees $(Z, L, Q, C)$. 
It takes one cycle to match the segment index generated by the Breakpoint Comparator with the tag stored in the Local Cache. 
If a hit occurs in the subsequent cycle,} the matched entry directly supplies the stored coefficients to the Execution Unit, bypassing the Global Multi-Level LUT, as shown in the pipeline cycle in Figure~\ref{fig:hw_arch}.

On a cache miss, \change{the lane arbiter checks the current status of the Global Multi-Level LUT. If the current lane is permitted to request data, the segment index is forwarded to the Global Multi-Level LUT, and the data is output in the next cycle. If the Global Multi-Level LUT is occupied by other lanes, the lane processing stalls up to three cycles (number of lanes minus one due to round-robin arbitration) to gain access. 
Once access to the Global Multi-Level LUT is granted, the coefficients are forwarded and the Local Cache is updated concurrently to hide the update latency from the critical path.}

\noindent\textbf{Global Multi-Level LUT:}
The Global Multi-Level LUT stores all coefficients for supported functions across four independent memory banks, one per degree $d \in \{0, 1, 2, 3\}$. 
Each bank has a fixed 32-bit width to store the coefficients. 
Our parameterized design allows instantiating this unit with different storage capacities.
\change{The implementation used in the experimental evaluations can store 20 constant coefficients 
since all segments have a constant term regardless of their degrees. 
The banks for the linear, quadratic, and cubic segments are designed to store 18, 12, and 3 coefficients, respectively.
We note that the Uni-SFU framework can support implementations with arbitrary sizes.}
Once a lane gains access to the global LUT, the full 3-cycle access path begins with index decoding and finding the address of the required coefficient in each memory bank.
In the subsequent cycle, the coefficients are fetched from the selected LUTs. 
Finally, they are propagated to the corresponding Execution Unit through the lane demultiplexer. 
The Local Cache is updated simultaneously during propagation to serve subsequent accesses. 
This mechanism ensures that multi-stage polynomial evaluations can be performed without stalls.


\noindent\textbf{Execution Unit:}
The Execution Unit comprises three parallel 32-bit floating-point MAC–ADD units, one for the cubic, one for the quadratic, and another for the linear term.
The input and coefficients are delivered to all active MAC--ADD units simultaneously.
Each MAC--ADD unit takes two cycles to compute results and add the constant term. That is, a first degree (linear) segment's output is ready in two cycles, a quadratic segment output is ready in four cycles, and a cubic segment output is ready in six cycles.

\change{The proposed Uni-SFU framework is scalable and precision agnostic, allowing each functional unit to be configured for low-bit precisions. For instance, the Breakpoint Comparator can be scaled down to 16-bit or 8-bit configurations to adapt to low precision SFUs. Similarly, the Local Cache, Global Multi Level LUT, and Execution Unit can be customized for low-bit implementations. The accuracy under such low bit configurations is validated in Section~\ref{subsec:VI-E} by casting the 32 bit breakpoints and coefficients, an approach that is demonstrated to be highly robust across both numerical accuracy and hardware implementations.}

\subsection{Uni-SFU Latency Analysis} \label{subsec:latency}

The end-to-end latency of Uni-SFU consists of 
(1) Breakpoint Comparator ($L_{\text{cmp}}$), (2) Local Cache ($L_{\text{cache}}$),
(3) Global Multi-Level LUT ($L_{\text{lut}}$), and (4) Execution Unit ($L_{\text{exec}}$).

\noindent\textbf{Best-Case Latency:} The Uni-SFU achieves the shortest latency $L_{\text{best}}$ when a \textit{hit occurs in the Local Cache}. 
In this case, the latency breakdown is as follows:
\begin{itemize}
    \item $L_{\text{cmp}} = 2$ cycles to determine the segment index by using a priority encoding scheme.
    \item $L_{\text{cache}} = 2$ cycles to retrieve the coefficients on a hit.
    \item $L_{\text{lut}} = 0$ cycles since it is bypassed.
    \item $L_{\text{exec}}$ is the time (2--6 cycles) taken by the MAC-ADD units in the Execution Unit.
\end{itemize}
%
Therefore, $L_{\text{best-case}} = 4 + L_{\text{exec}}$ cycles, where $L_{\text{exec}}$ is 2 cycles for linear, 4 cycles for quadratic, and 6 cycles for cubic segments.
Once the pipeline is filled, the Uni-SFU can reach one output/lane/cycle if the inputs result in a hit in the Local Cache.

\noindent\textbf{Worst-Case Latency:} The worst-case occurs when \textit{the segment misses in the Local Cache}, necessitating loading the coefficients from the global LUT with the following latency breakdown:
\begin{itemize}
    \item $L_{\text{cmp}} = 2$ cycles (identical to the best case).
    \item $L_{\text{cache}} = 1$ cycle for the parallel tag comparison to detect a cache miss.
    \item $L_{\text{lut}} = 6$ cycles: 3 cycles worst-case latency for winning the round-robin lane arbitration (this latency will be lower if the other lanes do not experience a miss in the same cycle). Then, 3 additional cycles within the LUT pipeline (address decoding, data fetching, and propagating to the Execution Unit). 
    \item $L_{\text{exec}}$ is the time (2--6 cycles) taken by the MAC-ADD units in the Execution Unit (same as the best-case).
\end{itemize}

Therefore, the total worst-case latency is $L_{\text{worst-case}} = 10 + L_{\text{exec}}$ cycles, incurring a conditional penalty of up to 5 additional cycles compared to the best case. 

\subsection{Locality Profiling and Average Latency}\label{subsec:V-D}



Figure~\ref{fig:locality_rate} shows the hit rate for various commonly used CNN, Transformer and NLP models when the proposed Uni-SFU architecture with a Local Cache of 4 entries evaluates their activation functions. 
Various representative layers of the CNN architectures exhibit higher locality up to 90\%, with an average hit rate of 83\%.
The Transformer architectures experience relatively lower (72\%--82\%) hit rates due to the homogeneous input distributions in convolutional layers. \change{\change{For NLP-specific workloads, the local-cache hit rate varies significantly depending on the model architecture. While LLaMA-2 7B\cite{touvron2023llama2} achieves an average hit rate of 97.2\%, GPT-Neo 1.3B\cite{gpt-neo} maintains a functional average hit rate of 72.7\% to sustain pipeline throughput.
}} 
In both vision and NLP families, locality decreases modestly from earlier to later layers as activation distributions broaden with network depth, yet remains above 72\% across all profiled cases.
These profiling results suggest that a small number of cache entries is sufficient to capture the dominant working set of segments across diverse inference workloads.

\noindent\textbf{Average Latency:} The average latency depends on both the cache hit rate and the degree distribution of the active segments. Given a hit rate $h$, it can be approximated as:
\begin{equation}
L_{\text{avg}} = h \cdot L_{\text{best}} + (1-h) \cdot L_{\text{worst}}.
\end{equation}
In our studies, we consistently observe 72\%--90\% hit rates, indicating that average latency approaches the best-case value in practice. A comprehensive analysis of hit rates across diverse workloads is shown in Section~\ref{subsec:V-D} and cache-entry configurations is presented in Section~\ref{subsec:VI-C}.

\begin{figure}[t]
\centering
\change{
\vspace{-2mm}
\includegraphics[width=0.9\linewidth]{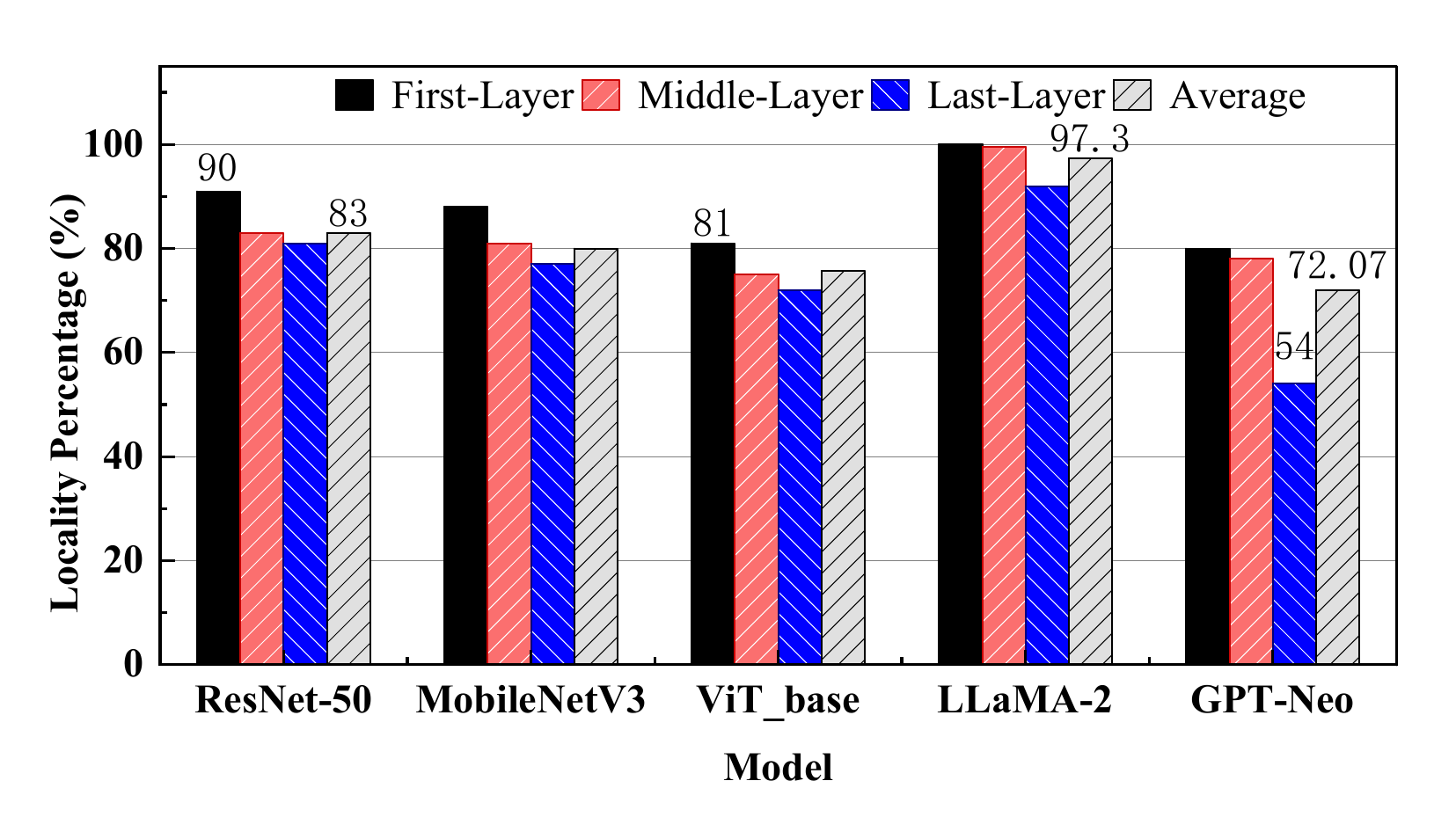}
\vspace{-3mm}
\caption{Locality rate insight for DNN, transformer neural network models and NLP workloads.}
\vspace{-2mm}
\label{fig:locality_rate}
}
\end{figure}

%% file: text/V-Implementation_and_Evaluation-sfu.tex
\begin{figure*}[t]
\centering
\includegraphics[width=1\linewidth]{figs/section-III-fig/comparison_flex.png}
\vspace{-6.5mm}
\caption{Accuracy and Area Comparison between Flex-SFU and Uni-SFU.}
\label{fig:pareto_vs_flex}
\vspace{-5mm}
\end{figure*}

\section{Experimental Results} \label{sec:VI}
\subsection{Experimental Setup}\label{subsec:VI-A}

\noindent\textbf{Hardware Implementation:} We implement the proposed Uni-SFU in Verilog HDL and synthesize it using Synopsys Design Compiler (\texttt{compile\_ultra}) targeting GF~22nm CMOS technology at 500~MHz. All operations use 32-bit IEEE~754 FP32 arithmetic. 
Single-lane and multi-lane configurations (up to 16 lanes) are synthesized independently to characterize area scaling. 
Segment boundaries and polynomial coefficients are preloaded into the Global Multi-Level LUT, implemented as a register-based structure in the GF~22nm standard cell library.
Area results are collected from post-synthesis netlist reports with no timing violations. 

\noindent\textbf{Workloads and Simulation Platform:}
Uni-SFU is evaluated using the TIMM package~\cite{rw2019timm}. From over 1,200 models, we select 700 representative benchmarks by removing architectural redundancies while preserving model diversity. For accuracy evaluation, original activation functions are replaced with the proposed piecewise-polynomial approximations in PyTorch without modifying model architectures. Experiments are conducted on an HPC server with AMD EPYC 7343 processors and an NVIDIA RTX A4000 GPU. Evaluating the full model suite takes approximately 84 hours for inference, while training each model requires roughly 6 hours.


\subsection{Area-Throughput Trade-Off Optimization}\label{subsec:VI-B}
\change{Before detailed SOTA comparisons, this section evaluates the area-throughput trade-offs using GELU, SiLU, Tanh, Sigmoid, Softplus, and ELU activation functions used by the most relevant prior work~\cite{reggiani2023flex}. Figure~\ref{fig:pareto_vs_flex} shows our Pareto-optimal results from the per-function optimization described in Section~\ref{subsec:IV-C}.} 
The blue curves with circle markers show optimal designs found by Uni-SFU by sweeping the segment count from 1 to 16. The red and black markers represent 8- and 16-segment points from Flex-SFU~\cite{reggiani2023flex}. To ensure a fair comparison, Flex-SFU is evaluated by mapping its configuration onto our hardware model. 
Specifically, uniform LUTs and corresponding Execution Units are integrated based on its original design specifications and maximum polynomial degrees. 
These results indicate that Uni-SFU achieves both lower approximation error and area at the same time. 
For GELU, Uni-SFU finds design points that have 9.95$\times$ lower MSE than the Flex-SFU design with quadratic approximation of 16 segments within the same area footprint. Similarly, our SiLU implementation achieves 10.7$\times$ lower MSE than the Flex-SFU's quadratic approximation of 8 segments when using the same area. Conversely, Uni-SFU achieves the same MSE with a 23\% smaller area compared to the 8-segment quadratic approximation of Flex-SFU for SiLU. We also observe similar substantial benefits for Tanh and Sigmoid, as shown in Figure~\ref{fig:pareto_vs_flex}. 
These results show that mixed-degree assignment and nonuniform segmentation co-optimize the area footprint and approximation accuracy effectively.

\subsection{Hardware Parameter Optimization}\label{subsec:VI-C}
The proposed Uni-SFU architecture can be implemented using different Local Cache sizes and lane counts. 
This section studies their area-performance trade-off.

\noindent\textbf{Cache Entry Analysis:} Figure~\ref{fig:cache_hit} illustrates the cache hit rate for a 4-lane configuration obtained for the DNNs in the TIMM library (covering 700+ variants, excluding intra-family scaling). The hit rate improves sharply from 35\% at a single entry to 82\% at 4 entries (marked in red), after which the curve exhibits diminishing returns and saturates toward 100\% at 16 entries. This suggests that 4 entries effectively capture the active working set for most mainstream inference workloads.

\begin{figure}[t]
\centering
\vspace{-1mm}
\includegraphics[width=1\linewidth]{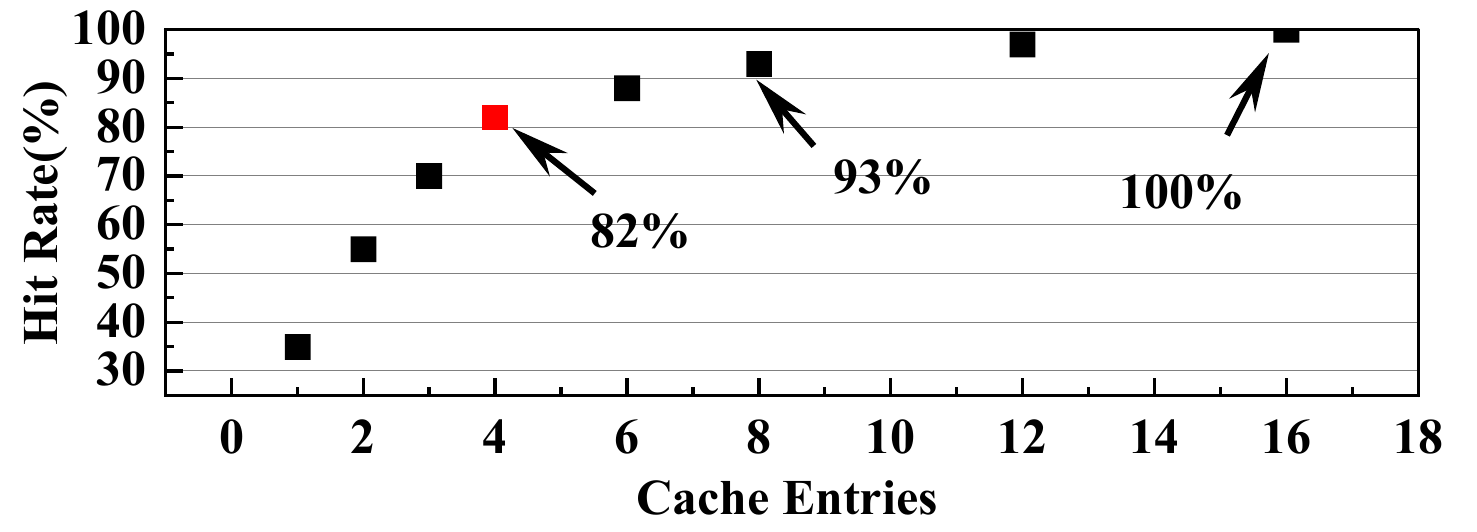}
\vspace{-7mm}
\caption{Cache Hit Rate vs. Cache Entries.}
\label{fig:cache_hit}
\vspace{-3mm}
\end{figure}

\begin{table*}[t]
\centering
\change{
\caption{A Comprehensive Hardware and Performance Comparison with SOTA SFU Designs.}
\label{tab:sota_comparison}
\resizebox{\textwidth}{!}{%
\begin{tabular}{lcccccccc}
\toprule
\textbf{Work} & \textbf{Supported Functions} & \textbf{Degree} & \textbf{Prec.} & \textbf{Area ($\mu\text{m}^2$)} & \textbf{Acc. (MSE/rMAE/PPL)$^\ddagger$} & \textbf{Share} & \textbf{Power (mW)} & \textbf{Platform} \\
\midrule
PACE~\cite{prasad2025pace} & Hardswish, SiLU, GELU, Sigm, Exp, Sin, Cos & Fixed (1-3) & FP32/16, BF16, FP8 & 10538 / 210760 & $1.2\times 10^{-6}$ (MSE) & Yes & N.F. & VPU \\ \addlinespace
QPA~\cite{geng2023qpa} & Sin, Exp, Sinc, Sigm, Tanh, Log2, Softsign & Fixed (1-2) & 8/12/16-bit & 4396.4 / 2763.3 & $2.51\times 10^{-3}$ (rMAE) & No & 1.94 / 1.66 & VPU \\ \addlinespace
PACE-Lite~\cite{prasad2025lite} & GELU, SiLU & Fixed (1-3) & FP32/16, BFP16 & 28178 / 563560 & Acc. Drop $<$ 1\% & Yes & N.F. & VPU \\ \addlinespace
Lu et al.~\cite{lu2023efficient} & GELU, Softmax, LayerNorm & Fixed (1) & INT8 & N.F. & N.F. & No & N.F. & FPGA \\ \addlinespace
Marca~\cite{li2024marca} & Exp, Log & Var (1-2) & FP32 & 28.20 mm$^2$ / N.F. & PPL incr. $<$ 0.3 & Yes & 3.92 / 3.36 & GPU \\ \addlinespace
GELU-MSDF$^\dagger$~\cite{taghavizade2024gelu} & GELU Only & Fixed (2) & INT32 & 8775 / 1930.5 & N.F. & No & 10.71 / 6.59 & AI Accelerator \\ \addlinespace
PEANO-ViT~\cite{sadeghi2024peano} & LN, Softmax, GELU, Rsqrt, Reciprocal & Var (1-2) & FP32-16 & N.F. & $2.56\times 10^{-4}$  & Yes & N.F. & FPGA \\ \addlinespace
FlexSFU (8 seg)~\cite{reggiani2023flex} & Tanh, Sigm, SiLU, GELU & Var (1-2) & FP32/16/INT8 & 7983.0 / 8001.2$^\mathsection$ & $1.35\times 10^{-6}$ (sq-AAE$^\star$) & No & 0.7 / 0.5998 & VPU \\ \addlinespace
FlexSFU (16 seg)~\cite{reggiani2023flex} & Tanh, Sigm, SiLU, GELU & Var (1-2) & FP32/16/INT8 & 9490.6 / 8967.6$^\mathsection$ & $3.35\times 10^{-8}$ (sq-AAE$^\star$) & No & 0.9 / 0.7712 & VPU \\
\midrule
\textbf{Uni-SFU (8 seg)} & \multirow{2}{*}{\begin{tabular}[c]{@{}l@{}}GELU, SiLU, Sigmoid,\\ Tanh, Softplus, ELU\end{tabular}} & \multirow{2}{*}{\textbf{Hybrid}} & \multirow{2}{*}{\textbf{FP32}} & \textbf{6204} & $\mathbf{1.56\times 10^{-8}(MSE)}$ & \multirow{2}{*}{\textbf{Yes}} & \textbf{0.56} & \multirow{2}{*}{\textbf{Universal}} \\ 
\textbf{Uni-SFU (16 seg)} & & & & \textbf{6800} & $\mathbf{1.85\times 10^{-9}(MSE)}$ & & \textbf{0.63} & \\
\bottomrule
\end{tabular}%
}
}
\begin{minipage}{\textwidth}
\scriptsize
\change{
\setlength{\parskip}{0.5pt}
$^\dagger$ Metrics for \textbf{GELU-MSDF} (45nm), \textbf{QPA} (28nm) and \textbf{PACE/PACE-Lite} (12nm) are normalized to 22nm technology node. \\
$^\ddagger$ Accuracy metrics are reported based on the \textit{worst-case error} observed across all supported functions for each respective work. \\
$^\mathsection$ For \textbf{FlexSFU}, the normalized area represents the hybrid estimated total by scaling storage area (factor 1.56~\cite{sarangi2021deepscaletool}) and synthesized Execution Units. \\
$^\star$ \textbf{sq-AAE}: Quadratic approx. via AAE; \textbf{MSE}: Mean Squared Error; \textbf{rMAE}: Real Max Absolute Error; \textbf{PPL}: Perplexity. \\
\text{N.F.}: Not Found.
}
\vspace{-3mm}
\end{minipage}

\end{table*}

\begin{figure}[b!]
\centering
\vspace{-3mm}
\includegraphics[width=1\linewidth]{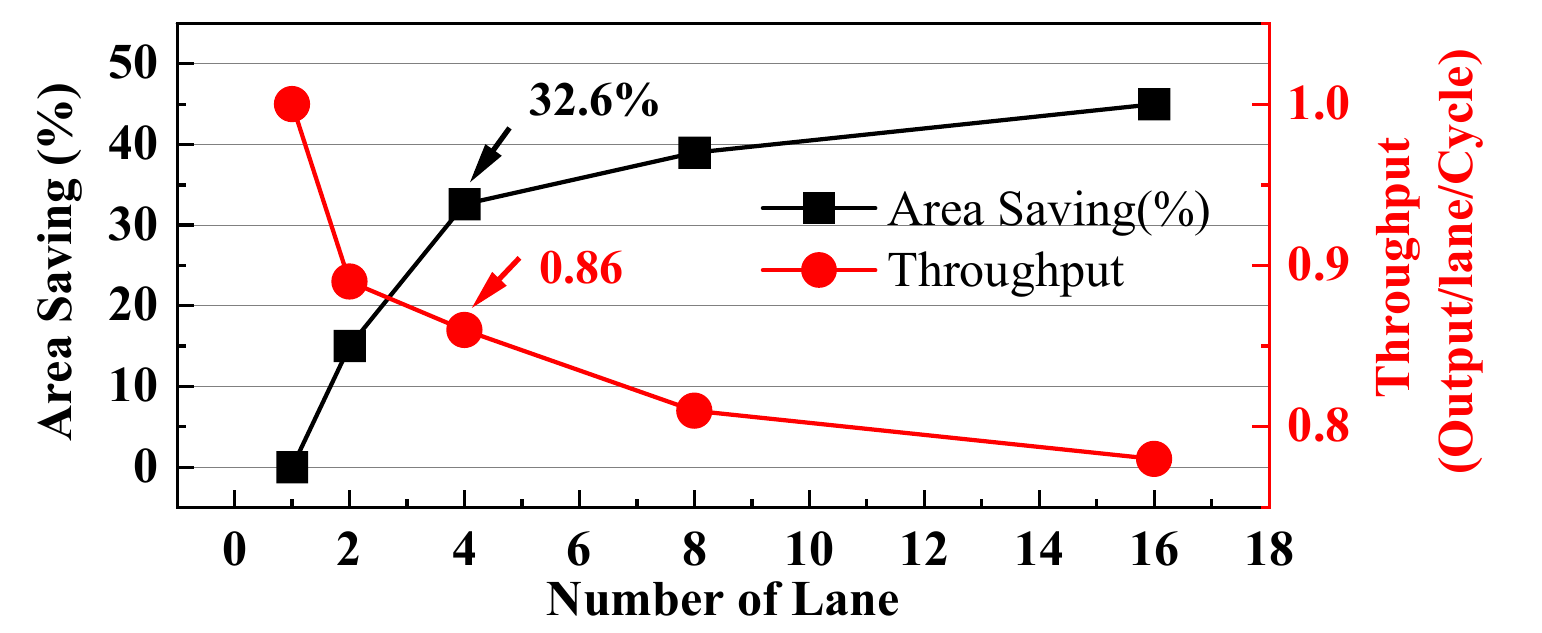}
\vspace{-7mm}
\caption{Shared Lane Number vs. Area Saving Percentage.}
\label{fig:throught_put}
\end{figure}

\noindent\textbf{The Number of Lanes vs. Area Saving and Throughput:} 
Uni-SFU's multi-lane architecture enables a cost-efficient mechanism to increase the parallelism (i.e., throughput) 
by sharing the resource-hungry global LUT across multiple lanes. 
\change{A \textit{baseline design} would replicate all blocks, including the global LUT to make the throughput proportional to the number of lanes at the cost of design area.}
Figure~\ref{fig:throught_put} compares the proposed Uni-SFU design against this baseline as a function of the number of lanes. 
The designs are identical when there is only one lane: no area savings (left axis) and the same throughput (right axis). 
When the lane count increases to two, Uni-SFU provides around 15\% lower area compared to full replication, while still achieving 0.9$\times$ of the peak throughput. 
As the number of lanes increases, we observe a diminishing rate of return in area saving (left axis, $\blacksquare$ markers) and throughput reduction (right axis, ${\color{red}\bullet}$ markers). The throughput degrades due to contention for the shared global LUT when cache misses occur.
This trend shows that our multi-lane design has increasing efficiency as the global LUT is amortized over an increasing number of lanes. 
To maintain a high throughput, our final design uses four lanes, which results in 0.86$\times$ peak throughput with 32.6\% area savings.

\subsection{\change{Comparisons Against Prior Approaches}}\label{subsec:VI-D}
\change{The proposed Uni-SFU achieves superior accuracy across all supported activation functions compared to SOTA SFU designs, with a maximum MSE of $1.85 \times 10^{-9}$ (16-segment) and $1.56 \times 10^{-8}$ (8-segment), outperforming all compared works by at least two orders of magnitude, as summarized in Table~\ref{tab:sota_comparison}. 
In terms of power consumption, Uni-SFU consumes 0.5607\,mW and 0.6317\,mW for the 8- and 16-segment configurations respectively, which represents a massive power reduction compared to alternative SOTA FP32-precision designs under identical normalization (e.g., Marca: 3.36\,mW).} 

\change{We note that our particular Uni-SFU implementation supports full FP32 precision and a much broader function set, while the general Uni-SFU framework can be applied to lower-precision computation (e.g., INT8 or FP8 quantization). Lower-precision computation is widely adopted to accelerate DNN models, but high-precision SFU support remains essential. In modern mixed-precision inference pipelines, complex non-linear activation functions (such as GELU and SiLU) are typically evaluated in FP32 precision via local dequantization to preserve model robustness and prevent severe accuracy degradation \cite{yuan2025give, tensorrt}. Advanced quantization frameworks directly demonstrate that special functional layers can implement a mixed data flow, where quantized tensors are temporarily cast back to higher precision for precise execution \cite{xi2024jetfire}.
Finally, Uni-SFU is the only universal design supporting multi-function hardware sharing on a single unified unit for six distinct activation functions: GELU, SiLU, Sigmoid, Tanh, Softplus, and ELU.}

\begin{table}[b!]
\centering
\caption{Area Comparison between Flex-SFU~\cite{reggiani2023flex} (Quadratic Degree) and Uni-SFU (Mixed Degree). $\%\downarrow$ denotes the area reduction of Uni-SFU relative to Flex-SFU.}
\label{tab:area_comparison}
\small
\setlength{\tabcolsep}{3.6pt}
\renewcommand{\arraystretch}{0.9}
\begin{tabular}{lccc}
\toprule
\textbf{Configuration} & \textbf{Flex-SFU ($\mu m^2$)} & \textbf{Uni-SFU ($\mu m^2$)} & \textbf{vs. Flex-SFU} \\
\midrule
\textit{Single-lane} & & & \\
~~8-segment  & 8,001.2 & 6,204.0 & 22.5\%$\downarrow$ \\
~~16-segment & 8,967.6 & 6,800.0 & 24.2\%$\downarrow$ \\
\midrule
\textit{4-lane} & & & \\
~~8-segment  & 20,410.6 & 17,251.1 & 15.5\%$\downarrow$ \\
~~16-segment & 22,238.1 & 18,347.1 & 17.5\%$\downarrow$ \\
\bottomrule
\end{tabular}
\end{table}

\subsection{Comparison with Flex-SFU}\label{subsec:VI-E}
\change{Among all the SOTA designs, Flex-SFU has the highest performance and the lowest power consumption. Hence, we further benchmark Uni-SFU with respect to Flex-SFU in terms of area, accuracy, and power consumption.}

\noindent \textbf{Area Comparison:} Flex-SFU~\cite{reggiani2023flex} reports storage areas of 7,983~$\mu\text{m}^2$ and 9,491~$\mu\text{m}^2$ for 8- and 16-segment configurations in 28\,nm CMOS, respectively. 
For a fair comparison, we scale its storage area to 22\,nm using a factor of 1.56~\cite{sarangi2021deepscaletool}, 
and include our synthesized Execution Units, yielding estimated total areas of 8,001.2~$\mu\text{m}^2$ and 8,967.6~$\mu\text{m}^2$ (Table~\ref{tab:area_comparison}). In contrast, Uni-SFU occupies only 6,204.0~$\mu\text{m}^2$ and 6,800.0~$\mu\text{m}^2$ for the respective cases. This architectural optimization achieves a consistent area reduction of 22.5\%--24.2\% for single-lane and up to 17.5\% for 4-lane shared configurations. 

The 4-lane baseline for Flex-SFU is estimated by accounting for its reported area scaling factors associated with multi-ported memory structures.
Table~\ref{tab:area_breakdown} summarizes the area breakdown of our single- and 4-lane designs. 
The Breakpoint Comparator and Local Cache consume moderate (6.9\% and 5.7\%) area of the single-lane design. 
Hence, replicating them in the multi-lane design offers a good trade-off, increasing their contribution to 10.3\% and 8.4\%, respectively. 
In contrast, the Global Multi-level LUT takes up 45\% of the single-lane design, justifying our motivation behind the multi-lane architecture. 
Indeed, its area contribution drops to 16.7\% in the 4-lane design. 
The Execution Unit takes up 42.4\% (the second largest) of the single lane, and 62.9\% (the largest) of the 4-lane design since it is replicated. 
Finally, the glue logic (e.g., multi-lane arbitration and demux to the Execution Unit) is not needed for the single-lane design, while its overhead in our 4-lane design is only 1.7\%, showing the efficiency of the proposed multi-lane architecture. 

\begin{table}[t]
\centering
\caption{Area breakdown percentage of Uni-SFU hardware components.}
\label{tab:area_breakdown}
\setlength{\tabcolsep}{6pt}
\renewcommand{\arraystretch}{0.9}
\small
\begin{tabular}{lcc}
\toprule
\textbf{Hardware Component} & \textbf{Single-lane (\%)} & \textbf{4-lane Shared (\%)} \\
\midrule
Breakpoint Comparator & 6.9\%  & 10.3\% \\
Local Cache           & 5.7\%  & 8.4\%  \\
Global ML-LUT         & 45.0\% & 16.7\% \\
Execution Units       & 42.4\% & 62.9\% \\
Glue Logic            & --     & 1.7\%  \\
\bottomrule
\end{tabular}
\end{table}

\begin{table}[t]
\centering
\vspace{-2mm}
\caption{MSE comparison with Flex-SFU~\cite{reggiani2023flex}. Here, $\mathrm{x}\downarrow$ and $\mathrm{x}\uparrow$ denote lower and higher than the reference by that factor.}
\label{tab:mse_comparison}
\small
\setlength{\tabcolsep}{3pt}
\renewcommand{\arraystretch}{1}
\begin{tabular*}{\columnwidth}{@{\extracolsep{\fill}}lccccc@{}}
\toprule
\textbf{Func.} &
\textbf{Lin. Ref.} &
\textbf{Quad. Ref.} &
\textbf{Uni-SFU} &
\textbf{vs Lin.} &
\textbf{vs Quad.} \\
\midrule
\multicolumn{6}{c}{\textbf{8-segment case} ($u=[\,1,\,2,\,2,\,3\,]$)} \\
\midrule
GELU     & $3.65\mathrm{e}{-6}$ & $3.78\mathrm{e}{-7}$ & $\mathbf{1.56e{-8}}$ & $234.3\mathrm{x}\downarrow$  & $24.3\mathrm{x}\downarrow$ \\
Sigmoid  & --                    & --                    & $\mathbf{3.23e{-9}}$ & --                            & --                         \\
SiLU     & $4.27\mathrm{e}{-5}$ & $1.35\mathrm{e}{-6}$ & $\mathbf{8.22e{-8}}$ & $519.5\mathrm{x}\downarrow$ & $41.0\mathrm{x}\downarrow$ \\
Tanh     & $1.37\mathrm{e}{-5}$ & $9.28\mathrm{e}{-7}$ & $\mathbf{2.15e{-8}}$ & $637.8\mathrm{x}\downarrow$  & $16.4\mathrm{x}\downarrow$ \\
Softplus & --                    & --                    & $\mathbf{7.09e{-9}}$ & --                            & --                         \\
ELU      & --                    & --                    & $\mathbf{3.63e{-10}}$ & --                           & --                         \\
\midrule
\multicolumn{6}{c}{\textbf{16-segment case} ($u=[\,2,\,6,\,9,\,3\,]$)} \\
\midrule
GELU     & $1.89\mathrm{e}{-7}$ & $9.07\mathrm{e}{-9}$ & $\mathbf{5.62e{-10}}$ & $336.1\mathrm{x}\downarrow$ & $16.1\mathrm{x}\downarrow$ \\
Sigmoid  & $2.88\mathrm{e}{-7}$ & $6.50\mathrm{e}{-9}$ & $\mathbf{7.40e{-10}}$ & $389.3\mathrm{x}\downarrow$ & $8.8\mathrm{x}\downarrow$  \\
SiLU     & --                    & --                    & $\mathbf{3.21e{-9}}$  & --                            & --                         \\
Tanh     & $4.26\mathrm{e}{-7}$ & $1.02\mathrm{e}{-8}$ & $\mathbf{1.85e{-9}}$  & $230.9\mathrm{x}\downarrow$ & $5.5\mathrm{x}\downarrow$  \\
Softplus & --                    & --                    & $\mathbf{4.53e{-9}}$  & --                            & --                         \\
ELU      & --                    & --                    & $\mathbf{4.10e{-10}}$ & --                           & --                         \\
\bottomrule
\multicolumn{6}{l}{\footnotesize ``--'' indicates data not reported in the related work.}
\end{tabular*}
\vspace{-2mm}
\end{table}

\noindent \textbf{Accuracy Comparison:} Uni-SFU achieves a smaller area than Flex-SFU while consistently providing higher accuracy, as summarized in Table~\ref{tab:mse_comparison}, with breakpoints re-optimized following Section~\ref{subsec:IV-E}. While Flex-SFU uses the Square of Average Absolute Error (sq-AAE), we report MSE for consistency with our optimization objective. Uni-SFU also achieves $1.1\times$--$3\times$ lower error under sq-AAE. At 8 segments, Uni-SFU achieves up to $519.5\times$ and $41.0\times$ lower error than the linear and quadratic references, respectively; at 16 segments, the reductions reach $389.3\times$ and $16.1\times$. These results demonstrate a better accuracy-area trade-off from mixed-degree polynomial assignment than the uniform-degree design in Flex-SFU.

To analyze scalability, we evaluate function approximation accuracy by casting fitted segment breakpoints and coefficients directly into the FP16 domain. 
As summarized in Table~\ref{tab:precision_mse}, Uni-SFU maintains high functional fidelity even under reduced bit-width configuration. Notably, while the baseline PEANO-ViT~\cite{sadeghi2024peano} is restricted to a narrow domain range of $[-4, 4]$, our Uni-SFU provides broader coverage of $[-8, 8]$. Despite this significantly wider approximation range, Uni-SFU achieves order-of-magnitude lower MSE compared to PEANO-ViT. This result demonstrates that our underlying algorithmic framework achieves superior approximation fidelity over a larger input space while maintaining robust precision, confirming its scalability and efficiency for diverse deployment scenarios.

\noindent \change{\textbf{Power Comparison:} Uni-SFU consumes 0.5607 mW and 0.6317 mW under the 8-segment and 16-segment configurations, respectively. Compared to the corresponding second-order Flex-SFU designs (0.7 mW and 0.9 mW), Uni-SFU achieves 19.9\% and 29.8\% power savings, respectively, while maintaining superior approximation accuracy. For example, Uni-SFU approximates the SiLU function with an MSE of $8.22\times10^{-8}$, which is significantly lower than the MSE of $1.35\times10^{-6}$ achieved by Flex-SFU.}

\begin{table}[t]
\change{
\vspace{-2.5mm}
\caption{MSE Comparison in FP16 Domain.}
\label{tab:precision_mse}
\centering
\small
\resizebox{\linewidth}{!}{
\begin{tabular}{lcccc}
\toprule
\textbf{Methodology / Work} & \textbf{Segments} & \textbf{Range} & \textbf{Precision} & \textbf{Max. MSE} \\
\midrule
PEANO-ViT~\cite{sadeghi2024peano} & 7 & $[-4, 4]$ & FP16 & $2.65 \times 10^{-4}$ \\
PEANO-ViT~\cite{sadeghi2024peano} & 10 & $[-4, 4]$ & FP16 & $8.31 \times 10^{-5}$ \\
\textbf{Uni-SFU} & 8 & $[-8, 8]$ & FP16 & $\mathbf{2.5 \times 10^{-6}}$ \\
\textbf{Uni-SFU} & 16 & $[-8, 8]$ & FP16 & $\mathbf{4.1 \times 10^{-7}}$ \\
\bottomrule
\end{tabular}%
}
}
\end{table}

\begin{figure}[b]
\centering
\includegraphics[width=1\linewidth]{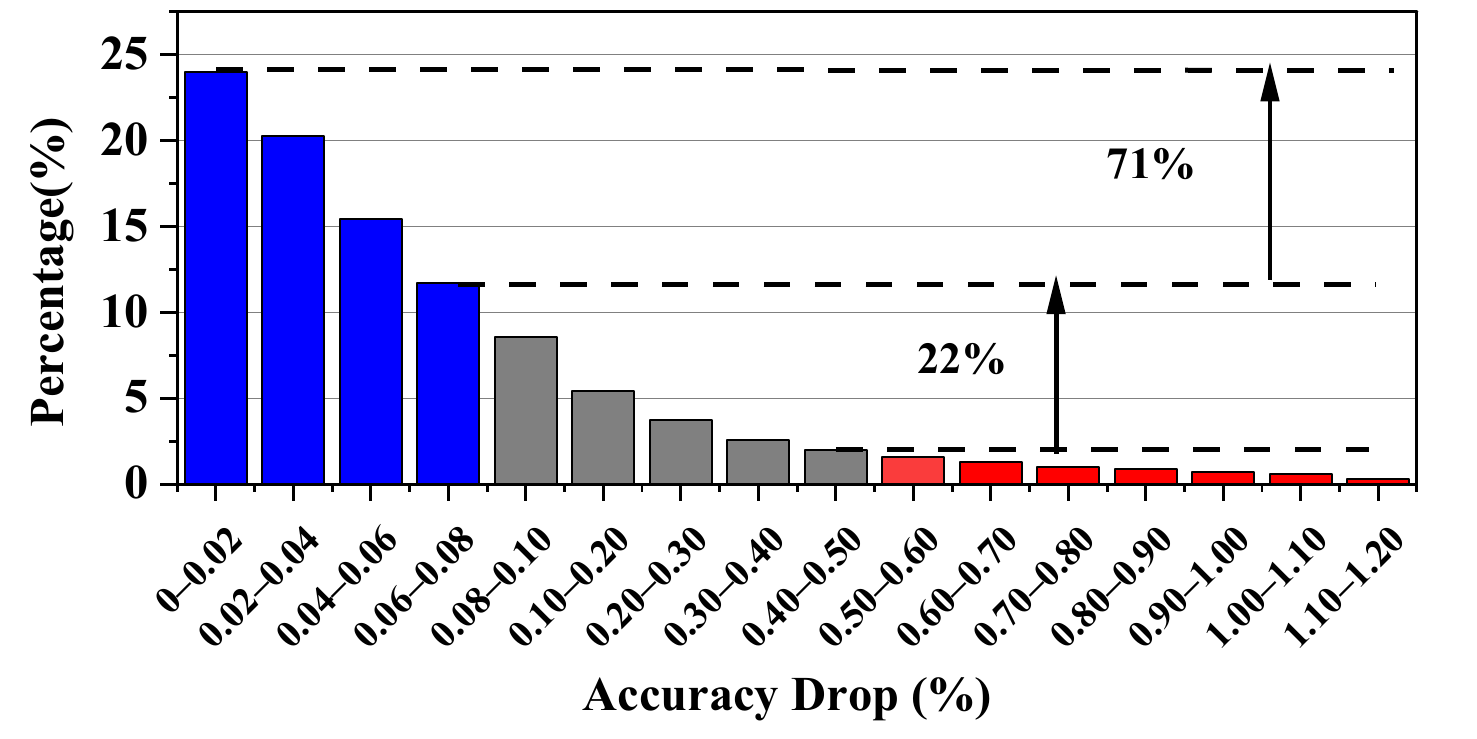}
\vspace{-7mm}
\caption{Distribution of dataset accuracy drop across 700 DNNs.}
\label{fig:cdf}
\vspace{-4mm}
\end{figure}

\begin{figure*}[t]
\change{
\centering
\vspace{-2mm}
\includegraphics[width=1.0\linewidth]{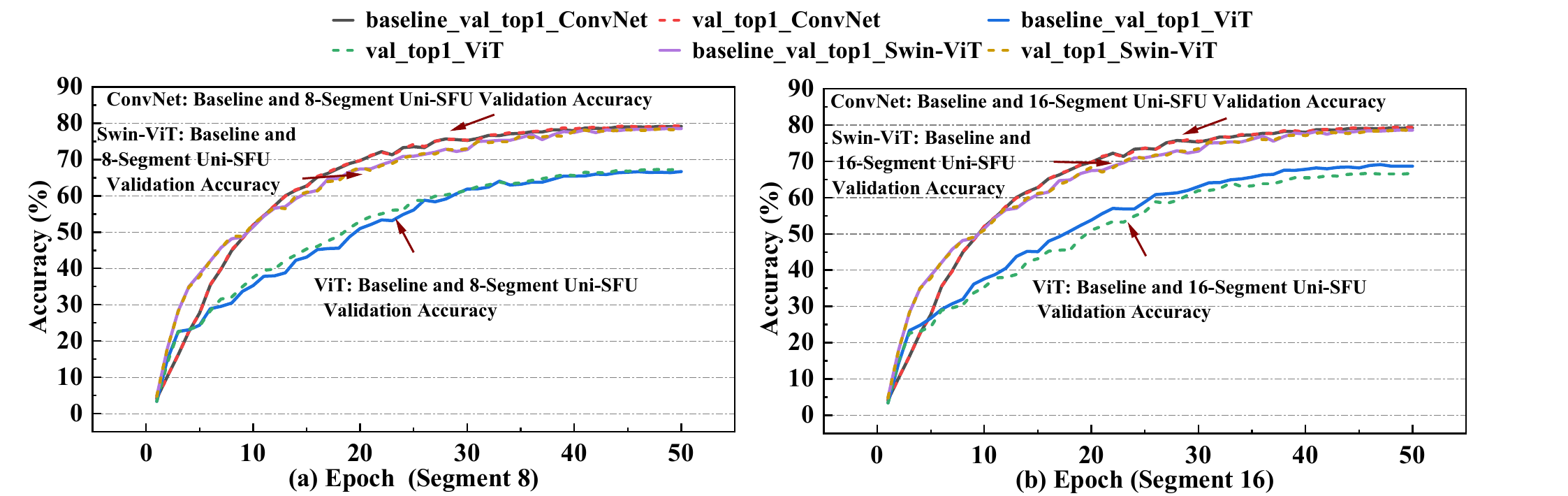}
\caption{Validation accuracy during training comparison with original non-linear activation Functions.}
\label{fig:training}
}
\vspace{-1mm}
\end{figure*}

\vspace{-1mm}
\begin{table}[t]
\centering
\renewcommand{\arraystretch}{1.1}
\caption{Mean accuracy drop on DNN workloads across diverse activation combinations.}
\label{tab:acc_drop}
\begin{tabular}{llrc}
\toprule
Activation Combination & Type & Count & Mean Drop (\%) \\
\midrule
GELU / SiLU / Sigmoid / Tanh & Single   & 490 & 0.012--0.078 \\
\midrule
Sigmoid+SiLU                 & 2-Comb   & 110 & 0.125 \\
GELU+Sigmoid                 & 2-Comb   & 55  & 0.198 \\
GELU+Sigmoid+SiLU            & 3-Comb   & 36  & 0.380 \\
GELU+Sigmoid+SiLU+Tanh       & 4-Comb   & 9   & 1.120 \\
\bottomrule
\end{tabular}
\end{table}

\subsection{End-to-End DNN Model Validation}\label{subsec:VI-F}
We finally evaluate the impact of Uni-SFU on the DNN model accuracy under both inference-only and full training scenarios to demonstrate its functional fidelity.

For inference evaluation, we replace the original activation functions with Uni-SFU approximations across both vision and NLP workloads. Figure~\ref{fig:cdf} shows the Top-1 accuracy degradation of Uni-SFU across 700 TIMM models. The vast majority of models experience a negligible accuracy drop below 0.02\% (the x-axis), with 71\% of the models having 0.08\% or lower degradation, which is negligible.
Only 7\% of the models have more than 0.50\% (highlighted as outliers in red) degradation, and
\textit{none of them experience more than 1.2\% accuracy degradation}.


Table~\ref{tab:acc_drop} summarizes results across 700 models. Single-function models maintain a mean accuracy drop below 0.078\%, while the most complex four-function combinations reach 1.12\%. Overall, the accuracy drop remains within 1.12\%, demonstrating the robustness of Uni-SFU across diverse DNN workloads.
We further evaluate three NLP models, GPT-Neo 1.3B~\cite{gpt-neo}, LLaMA-2 7B~\cite{touvron2023llama2}, and DistilBERT SST-2~\cite{sanh2019distilbert}, as shown in Table~\ref{tab:model_evaluation}. On WikiText-2~\cite{merity2016pointer}, Uni-SFU changes perplexity by only +0.0012 for GPT-Neo and -0.0005 for LLaMA-2. DistilBERT retains its original 91.06\% accuracy on GLUE SST-2~\cite{wang2019glue}. These results confirm that Uni-SFU preserves model accuracy across both vision and NLP workloads.

For training, we integrate Uni-SFU into both the forward and backward paths. Since the backward pass gradient calculation explicitly requires the activation function's derivative via the chain rule, maintaining high approximation fidelity is crucial to prevent gradient corruption. We validate Uni-SFU by training three vision models (ConvNext~\cite{todi2023convnext}, ViT~\cite{dosovitskiy2020image}, Swin-ViT~\cite{liu2021swin}) in TIMM, monitoring convergence over 50 epochs for 8-segment and 16-segment configurations. As shown in Figure ~\ref{fig:training}, validation accuracy tightly aligns with FP32 baselines; minor deviations are attributable to the learning rate scheduler rather than gradient errors. Final accuracy fluctuations remain strictly bounded: $-0.02\%$ to $+0.32\%$ for 8-segment, and $0.02\%$ to $0.18\%$ for 16-segment configurations. These results confirm that Uni-SFU ensures derivative fidelity and full network convergence.


\begin{table}[t]
\centering
\vspace{-2mm}
\change{
\caption{Model Evaluation on NLP workloads}
\label{tab:model_evaluation}
\resizebox{\columnwidth}{!}{%
\begin{tabular}{lcccccc}
\toprule
\textbf{Model} & \textbf{Activation} & \textbf{Test Set (Size)} & \textbf{Metric} & \textbf{Original} & \textbf{Approx} & $\boldsymbol{\Delta}$ \\ 
\midrule
\textbf{GPT-Neo 1.3B\cite{gpt-neo}} & \begin{tabular}[c]{@{}c@{}}GELU \end{tabular} & WikiText-2 & Perplexity & 12.3450 & 12.3462 & +0.0012 \\ 
\addlinespace 
\textbf{LLaMA-2 7B\cite{touvron2023llama2}} & \begin{tabular}[c]{@{}c@{}}SiLU \end{tabular} & WikiText-2 & Perplexity & 5.4323 & 5.4318 & -0.0005 \\ 
\addlinespace
\textbf{DistilBERT SST-2\cite{sanh2019distilbert}} & \begin{tabular}[c]{@{}c@{}}GELU\end{tabular} & \begin{tabular}[c]{@{}c@{}}GLUE\\ SST-2\end{tabular} & Accuracy (\%) & 91.06 & 91.06 & 0.00 \\ 
\bottomrule
\vspace{-5mm}
\end{tabular}%
}
}
\end{table}
\vspace{-5mm}

\change{\subsection{\change{Case Study: Platform Overhead and Accelerator Integration}}\label{subsec:VI-G}
We evaluate the impact of integrating Uni-SFU onto the open-source NVDLA accelerator platform \cite{nvdla}, configured with 16-bit precision and a 512-MAC PE array.}
\change{The legacy Single Data Processor (SDP) module accounts for 14\% of the NVDLA compute engine area when synthesized at the GlobalFoundries 22nm node. To support a 256-bit/cycle output, we instantiate four 4-lane shared Uni-SFU blocks, ensuring zero pipeline stalls. Our Uni-SFU cluster (0.0608 $mm^2$) replaces legacy LUTs within the SDP, achieving a 40\% area reduction in non-linear execution sub-modules. By retaining legacy logic for data alignment and handshaking, we ensure full functional compatibility. Normalized against the 22nm NVDLA (48.0 $mm^2$), the Uni-SFU cluster introduces a negligible 0.13\% global area overhead, keeping the combined SDP-related footprint below 10\% of the total engine. These results demonstrate that Uni-SFU provides efficient, high-throughput non-linear acceleration with minimal silicon area impact.
}

%% file: text/VI-Conclusion-sfu.tex
\section{Conclusion} \label{sec:VI}
This work presents Uni-SFU, a universal algorithm--hardware co-design framework supporting diverse activation functions on a single configuration. By employing non-uniform breakpoint search and mixed-degree polynomial assignment guided by an RTL-driven area model, Uni-SFU jointly optimizes accuracy and silicon area across all target functions. Evaluated across 700 CNNs and Transformers, as well as three NLP models using 22nm CMOS at 500 MHz, Uni-SFU limits the maximum top-1 accuracy degradation to 1.12\% for DNNs. \change{For NLP workloads, it demonstrates robust approximation fidelity, with a negligible perplexity variation of +0.0012 for GPT-Neo and -0.0005 for LLaMA-2.} In multi-lane deployment, a Local Cache exploits input locality, achieving a 0.85 hit rate and 32.6\% area reduction for 4-lane sharing (approaching 45\% at 16 lanes). Compared to prior work, Uni-SFU delivers $5.5\times$ to $637.8\times$ accuracy improvements at comparable area. This principled co-design generalizes resource distribution for multiple functions under a unified hardware envelope on resource-constrained platforms.
